\theoremstyle{plain}
\newtheorem{theorem}{Theorem}[section]
\newtheorem{lemma}[theorem]{Lemma}
\newtheorem{proposition}[theorem]{Proposition}
\theoremstyle{definition}
\newtheorem{definition}[theorem]{Definition}
\newtheorem{assumption}[theorem]{Assumption}
\newtheorem{observation}[theorem]{Observation}
\DeclareMathOperator*{\argmin}{arg\,min}
\DeclareMathOperator*{\poly}{poly}
\DeclareMathOperator*{\maxdist}{dist-slack}
\DeclareMathOperator*{\polylog}{polylog}
\DeclareMathOperator*{\support}{support}
\DeclarePairedDelimiter\ceil{\lceil}{\rceil}
\begin{document}

\title{Stochastic Optimization and Learning for Two-Stage Supplier Problems}

\author{Brian Brubach\thanks{Wellesley College.
Email: \href{mailto:bb100@wellesley.edu}{bb100@wellesley.edu}} \and Nathaniel Grammel\thanks{University of Maryland, College Park. 
Email: \href{mailto:ngrammel@cs.umd.edu}{ngrammel@umd.edu}} \and David G. Harris\thanks{University of Maryland, College Park. Email: \href{mailto:davidgharris29@gmail.com}{davidgharris29@gmail.com}} \and Aravind Srinivasan\thanks{University of Maryland, College Park. 
Email: \href{mailto:asriniv1@umd.edu}{asriniv1@umd.edu}} \and Leonidas Tsepenekas \thanks{University of Maryland, College Park. 
Email: \href{mailto:ltsepene@cs.umd.edu}{ltsepene@cs.umd.edu}} \and Anil Vullikanti \thanks{University of Virginia. Email: \href{mailto:vsakumar@virginia.edu}{vsakumar@virginia.edu}}}

\maketitle

\date{}

\begin{abstract}
   The main focus of this paper is radius-based (supplier) clustering in the two-stage stochastic setting with recourse, where the inherent stochasticity of the model comes in the form of a budget constraint. In addition to the standard (homogeneous) setting where all clients must be within a distance $R$ of the nearest facility, we provide results for the more general problem where the radius demands may be \emph{inhomogeneous} (i.e., different for each client). We also explore a number of variants where additional constraints are imposed on the first-stage decisions, specifically matroid and multi-knapsack constraints, and provide results for these settings. 

We derive results for the most general distributional setting, where there is only black-box access to the underlying distribution. To accomplish this, we first develop algorithms for the \emph{polynomial scenarios} setting; we then employ a novel \emph{scenario-discarding} variant of the standard \emph{Sample Average Approximation (SAA)} method, which crucially exploits properties of the restricted-case algorithms. We note that the scenario-discarding modification to the SAA method is necessary in order to optimize over the radius.

\end{abstract}

\textit{\textbf{Keywords --- }
     Clustering, Facility location, stochastic optimization, approximation algorithms.
}

\section{Introduction}\label{sec:intro}
Stochastic optimization, first introduced in the work of Beale \cite{Beale1955} and Dantzig \cite{Dantzig1995}, provides a way to model uncertainty in the realization of the input data. In this paper, we give approximation algorithms for a family of problems in stochastic optimization, and more precisely in the $2$-\emph{stage recourse model} \cite{Swamy2006}. 
Our formal problem definitions follow.

We are given a set of clients $\mathcal{C}$ and a set of facilities $\mathcal{F}$, in a metric space with a distance function $d$. We let $n = |\mathcal{C}|$ and $m= |\mathcal{F}|$. Our paradigm unfolds in two stages. First, in \emph{stage-I}, each $i \in \mathcal{F}$ has a cost $c^I_i$, but at that time we do not know which clients from $\mathcal{C}$ will need service. At this point, we can proactively open a set of facilities $F_I$. After committing to $F_I$, a \emph{scenario} $A$ is sampled from some underlying distribution $\mathcal D$, which specifies some subset of clients $\mathcal C^A$ needing service; each $i \in \mathcal{F}$ now has a cost $c^A_i$ (which may vary across scenarios $A \in \mathcal{D}$). When this scenario $A$ arrives we can augment the solution by opening some additional \emph{stage-II} facilities $F_A$.

We then ask for $F_I$ and $F_A$, such that every client $j \in \mathcal C^A$ has $d(j, F_I \cup F_A) \leq R_j$ for every $A$ that materializes, where $R_j$ is some given radius dovemand. (We define $d(j, S) = \min_{i \in S}d(i,j)$ for any $j\in \mathcal{C}$ and $S \subseteq \mathcal{F}$.)
Furthermore, the expected opening cost is required to be at most some given budget $B$, i.e., $\sum_{i \in F_I}c^{I}_i + \mathbb{E}_{A \sim \mathcal{D}}\Big{[}\sum_{i \in F_A}c^{A}_i\Big{]} \leq B$.  We call this problem \emph{Two-Stage Stochastic Supplier} or  \textbf{2S-Sup} for short. 

If the values $R_j$ are all equal to each other, we say the instance is \emph{homogeneous} and write simply $R$. There may be additional search steps to minimize the radii $R_j$ and/or the budget $B$. For homogeneous instances, one natural choice is to fix $B$ and choose the smallest radius $R = R^*$.

For brevity, we write $j \in A$ throughout as shorthand for $j \in \mathcal C^A$. Also, to avoid degenerate cases, we always assume that $d(j, \mathcal F) \leq R_j$ for all clients $j$.

 \subparagraph*{Additional Stage-I Constraints:} Beyond the basic version of the problem, we also consider variants with additional hard constraints on the set of chosen stage-I facilities.

In \emph{Two-Stage Stochastic Matroid Supplier} or \textbf{2S-MatSup} for short, the input also includes a matroid $\mathcal{M}$. In this case, we additionally require that $F_I$ be an independent set of $\mathcal{M}$. 

In \emph{Two-Stage Stochastic Multi-knapsack Supplier} or \textbf{2S-MuSup} for short, there are $L$ additional knapsack constraints on $F_I$. Specifically, we are given budgets $W_{\ell} \geq 0$ and integer weights $f^{\ell}_i$ for each $i \in \mathcal{F}$ and index $\ell \in [L]$, such that the stage-I facilities should satisfy $\sum_{i \in F_I}f^{\ell}_i \leq W_{\ell}$ for all $\ell$.  In this case we further define a parameter $\Lambda = \prod_{\ell=1}^{L} W_{\ell}$.

\subparagraph*{Representing the Distribution:}
The most general way to represent the scenario distribution $\mathcal D$ is the \emph{black-box} model \cite{Swamy2011, Gupta20004, Shmoys2006, Swamy2019, multi2005}, where we have access to an oracle to sample scenarios $A$ according to $\mathcal{D}$. We also consider the \emph{polynomial-scenarios} model \cite{Srini2007, Immorlica2004, Ravi2004, Gupta2004}, where the distribution $\mathcal D$ is listed explicitly. We use the suffixes \textbf{BB} and \textbf{Poly} to distinguish these settings.  For example, \textbf{2S-Sup-BB} is the previously defined \textbf{2S-Sup} in the black-box model.

In either case, we define $\support(\mathcal D)$ to be the set of scenarios with non-zero probability under $\mathcal D$. Thus, in the polynomial-scenarios model, $\support(\mathcal D)$ is provided as a finite list of scenarios $Q = \{A_1, \dots, A_N \}$ along with occurrence probabilities $p_{A_i}$.  For brevity, we also write $A \in \mathcal D$ as shorthand for $A \in \support(\mathcal D)$ and $|\mathcal D|$ for $|\support(\mathcal D)|$. Note that a scenario $A \in \mathcal{D}$ corresponds not only to a subset ${\mathcal{C}}^A$ of clients but also to the \emph{costs} $c^A_i \in \mathbb{R}$ for each facility; thus, $\support(\mathcal D) \subseteq 2^{\mathcal C} \times \mathbb{R}^\mathcal{F}$ and, in the black-box setting, $|\mathcal D|$ may be uncountably infinite.
 
 In both settings, our algorithms must have runtime polynomial in $n, m$. For the polynomial-scenarios case, the runtime should also be polynomial in the number of scenarios $N = |\mathcal D |$.
\subsection{Motivation}

Clustering is a fundamental task in unsupervised and self-supervised learning. The stochastic setting models situations in which decisions must be made in the presence of uncertainty and are of particular interest in learning and data science. The black-box model is motivated by data-driven applications where specific knowledge of the distribution is unknown but we have the ability to sample or simulate from the distribution.  To our knowledge, radius minimization has not been previously considered in the two-stage stochastic paradigm. Most prior work in this setting has focused on \emph{Facility Location} \cite{Srini2007, Swamy2011, Ravi2004, Shmoys2006, Pal11, Swamy2019, multi2005}. On similar lines, \cite{agrawal2008} studies a stochastic $k$-center variant, where points arrive independently and each point only needs to get covered with some given probability. \textbf{2S-Sup} is the natural two-stage counterpart of the well-known \textbf{Knapsack-Supplier} problem, which has a well-known $3$-approximation \cite{Hochbaum1986}. 

From a practical perspective, our problems are motivated from questions of healthcare resource allocation~\cite{devries-pom20}.
For instance, during the COVID-19 pandemic, testing and vaccination centers were deployed at different kinds of locations, and access was an important consideration~\cite{li:aamas22, mehrab2022data}; access can be quantified in terms of different objectives including distance, as in our work. Here,
$\mathcal{F}$ and $\mathcal{C}$ correspond to such locations and the population affected by the outbreak, and needing services, respectively.
$\mathcal{D}$ corresponds to a distribution of disease outcomes.
Health departments prepared some vaccination and testing sites in advance, based on projected demands~\cite{bertsimas2022locate}, i.e., in stage-I, which may have multiple benefits; for example, the necessary equipment and materials might be cheaper and easier to obtain. 

An outbreak is an instance from $\mathcal{D}$, and after it actually happened, additional testing and vaccination locations were deployed or altered based on the new requirements, e.g.,~\cite{mehrab2022data}, which corresponds to stage-II decisions.
To continue this example, there may be further constraints on $F_I$, irrespective of the stage-II decisions, which cannot be directly reduced to the budget $B$. For instance, there may be a limited number of personnel available prior to the disease outbreak, assuming that facility $i$ requires $f_i$ people to keep it operational during the waiting period. (These additional stage-I constraints have not been previously considered  in the two-stage stochastic regime.)

\subsection{Our Generalization Scheme and Comparison with Previous Results}\label{sec:Our-Gen}
Our main goal is to develop algorithms for the black-box setting. As usual in two-stage stochastic problems, this has three steps. First, we develop algorithms for the simpler polynomial-scenarios model. Second, we sample a small number of scenarios from the black-box oracle and use our polynomial-scenarios algorithms to (approximately) solve the problems on them. Finally, we extrapolate the solution to the original black-box problem. This overall methodology is called \emph{Sample Average Approximation (SAA)}.

Unfortunately, standard SAA approaches \cite{swamySAA, Charikar2005} do not directly apply to radius minimization problems. On a high level, the obstacle is that radius-minimization requires estimating the cost of each approximate solution; counter-intuitively, this may be harder than optimizing the cost (which is what is done in previous results). See Appendix~\ref{sec:appendix} for an in-depth discussion. 

We need a new SAA approach. Consider a homogeneous problem instance, where we have a specific guess for the radius $R$. We sample a set of $N$ scenarios $Q = \{A_1, \dots, A_N \}$ from the oracle. We then run our polynomial-scenarios $\eta$-approximation algorithms on $Q$, which are guaranteed to provide solutions covering each client within distance $\eta R$. If $R$ is guessed correctly, and $N$ is chosen appropriately, these solutions have cost at most $(1+\epsilon)B$ on $Q$ with good probability. In the end we keep the minimum guess for $R$ whose cost over the samples is at most $(1+\epsilon)B$.  

For this guess $R$, the polynomial-scenarios algorithm returns a stage-I set $F_I$, and a stage-II set $F_{A}$ for each $A \in Q$. \textbf{Our polynomial-scenarios algorithms are also designed to satisfy two additional key properties.} First, given $F_I$ and any $A \notin Q$, there is an \emph{efficient} process to \emph{extend} the algorithm's output to a stage-II solution $F_A$ with $d(j,F_I \cup F_A) \leq \eta R_j$ for all $j \in A$. Second, irrespective of $Q$, the set $\mathcal S$ of resulting black-box solutions has only exponential size as a function of $n$ and $m$ (by default, it could have size $2^{m |\mathcal D|}$, and $\mathcal D$ may be exponentially large or even infinite). \textbf{We call algorithms satisfying these properties \emph{efficiently generalizable}}.

After using the extension process to construct a solution for every $A$ that materializes, we use a final \emph{scenario-discarding} step. Specifically, for some choice of $\alpha \in (0,1)$, we first determine a threshold value $T$ corresponding to the $\ceil*{\alpha N}^{\text{th}}$ costliest scenario of $Q$. Then, if for an arriving $A$ the computed set $F_A$ has stage-II cost more than $T$, we perform no stage-II openings by setting $F_A = \emptyset$ (i.e., we ``give up" on $A$). This step coupled with the bounds on $|\mathcal S|$ ensure that the overall opening cost of our solution is at most $(1+\epsilon)B$. Note that discarding implies that there may exist some scenarios $A$ with $d(j, F_I \cup F_A) > \eta R_j$ for some $j \in A$, but these only occur with frequency $\alpha$.

In Section \ref{sec:Gen}, we formally present our SAA scheme. We summarize it as follows:
\begin{theorem}
\label{thm11}
Suppose we have an efficiently generalizable, $\eta$-approximation for the polynomial-scenarios variant of any of the problems we study. Let $\mathcal{S}$ be the set of all potential black-box solutions its extension process may produce. Then, for any $\gamma, \epsilon, \alpha \in (0,1)$ and with $O\big{(}\frac{1}{\epsilon \alpha}  \log \big{(}\frac{nm|\mathcal{S}|}{\gamma}\big{)} \log  \big{(} \frac{nm}{\gamma} \big{)} \big{)}$ samples, we can compute a black-box solution $F_I$, $F_A$ for all $A \in \mathcal{D}$ such that, with probability at least $1 - \gamma$, one of the two conditions holds: 
\begin{enumerate}
\item $\sum_{i \in F_I}c^{I}_i + \mathbb{E}_{A \sim \mathcal{D}}[\sum_{i \in F_A}c^{A}_i] \leq (1+\epsilon) B$ 
and $\Pr_{A \sim \mathcal{D}}[d(j, F_I \cup F_A) \leq \eta R_j] \geq 1 - \alpha \quad  \forall j \in A$.
\item The algorithm returns INFEASIBLE and the original problem instance was also infeasible.
\end{enumerate}
\end{theorem}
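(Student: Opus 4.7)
The plan is to execute the SAA scheme sketched in Section 1.2: enumerate candidate radii $R$, draw a sample $Q=\{A_1,\dots,A_N\}$ of $N$ i.i.d.\ scenarios, run the given polynomial-scenarios $\eta$-approximation on $Q$, and then apply scenario discarding to the extended black-box solution. For homogeneous instances the optimal radius $R^*$ lies among the $O(nm)$ distinct distances $d(i,j)$, and a polynomial-size search suffices in general, so I would try each candidate. For a fixed guess $R$, I invoke the polynomial-scenarios algorithm with budget $B$ on the empirical distribution on $Q$; if it succeeds with empirical cost at most $(1+O(\epsilon))B$, I record $R$ and keep the returned $F_I$ together with the extension procedure. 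Taking the smallest successful $R$, I compute the threshold $T$ as the $\lceil \alpha N\rceil$-th largest stage-II cost over $Q$, and for each arriving $A$ I set $F_A$ to the extension if its cost is $\leq T$ and to $\emptyset$ otherwise. If no guess succeeds, I return INFEASIBLE.

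The core technical step is a uniform multiplicative concentration bound over $\mathcal{S}$. Each $\sigma\in\mathcal{S}$ is a pair of a stage-I set and an extension map from scenarios to stage-II facility sets; by efficient generalizability, $|\mathcal{S}|$ is at most $\exp(\poly(n,m))$ independently of $|\mathcal{D}|$. Using $T$ as a natural truncation level, I would apply a multiplicative Chernoff/Hoeffding bound to the truncated stage-II cost random variable for each $\sigma$, then union-bound over $\sigma\in\mathcal{S}$, over the $O(nm)$ radius candidates, and over the failure probability $\gamma$. This yields the $\log(nm|\mathcal{S}|/\gamma)$ factor in $N$. A parallel Chernoff argument shows that the empirical $\alpha$-quantile $T$ is a good estimate of the true $\alpha$-quantile of the stage-II cost under $\mathcal{D}$, so the actual discarding probability is at most about $\alpha$ while the expected cost contribution of above-threshold scenarios is at most $\epsilon B$. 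The extra $\log(nm/\gamma)$ factor comes from boosting the quantile/tail estimate and from handling per-client feasibility over arriving scenarios.

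Correctness on feasible instances follows because at $R=R^*$ the true optimum has expected cost at most $B$, hence empirical cost at most $(1+\epsilon)B$ on $Q$ with high probability by the uniform bound; this forces the polynomial-scenarios algorithm to succeed at $R^*$, so the algorithm picks some $R\leq R^*$. The stage-I cost is bounded directly by $B$ from the polynomial-scenarios output, and the expected stage-II cost after discarding is within $(1+\epsilon)$ of the empirical mean on $Q$, giving the overall $(1+\epsilon)B$ bound. The probability that a client $j\in A$ is uncovered within $\eta R_j$ is at most the discarding probability, which is close to $\alpha$. Conversely, if every guess fails on $Q$, the same uniform bound shows that no $R$-feasible solution exists under the true distribution, so INFEASIBLE is correct.

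The main obstacle is the union bound over extension rules. Standard SAA arguments union-bound only over a fixed polynomial family of stage-I sets, but here we must also range over \emph{all} ways of extending $F_I$ to arbitrary arriving scenarios, which is precisely why the hypothesis $|\mathcal{S}|\leq\exp(\poly(n,m))$ is essential. A related subtlety is that stage-II costs can be unbounded, so the Chernoff argument must be applied to the truncated cost $\min(c(F_A),T)$ rather than to $c(F_A)$ itself; this intertwines the choice of $T$ with the concentration bound and forces the quantile analysis and the expected-cost analysis to be carried out simultaneously rather than sequentially.
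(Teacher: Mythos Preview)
Your high-level architecture (sample, run the poly-scenarios algorithm, extend, discard above the empirical $\alpha$-quantile, union bound over $\mathcal S$) matches the paper, and you correctly identify that Chernoff must be applied to \emph{truncated} costs because stage-II costs are unbounded. But there is a genuine gap in your feasibility argument, and it is exactly the step that drives the extra $\log$ factor you are trying to explain.

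You write that at $R=R^*$ ``the true optimum has expected cost at most $B$, hence empirical cost at most $(1+\epsilon)B$ on $Q$ with high probability by the uniform bound.'' This does not follow. The uniform bound you set up is over strategies in $\mathcal S$, i.e.\ the outputs of the extension procedure; the optimal strategy $s^\star$ need not lie in $\mathcal S$, so the union bound over $\mathcal S$ says nothing about $s^\star$. And you cannot simply add $s^\star$ to the collection and apply Chernoff to it separately: the per-scenario cost $C(s^\star,A)$ is unbounded (there is no $\Delta$ assumption here), so standard multiplicative Chernoff/Hoeffding is unavailable, and there is no truncation level you can use for $s^\star$ without already knowing its cost structure. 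The paper handles this with a different tool: Feige's inequality, which for any sum of independent \emph{nonnegative} random variables gives $\Pr[\text{sum} < \text{mean}+\delta]\geq \min\{\delta/(1+\delta),1/13\}$ with no boundedness hypothesis. This yields only a constant success probability (at least $1/13$) that the empirical cost of $s^\star$ is at most $(1+\epsilon)B$, which is why the algorithm repeats the whole sampling-and-solve step $\Theta(\log(1/\gamma))$ times. That repetition, not ``boosting the quantile/tail estimate'' or ``per-client feasibility,'' is the source of the second logarithmic factor in the sample bound.

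A smaller point: because the feasibility argument only certifies empirical cost $\leq(1+\epsilon)B$, the poly-scenarios algorithm must be invoked with budget $(1+\epsilon)B$, not $B$; otherwise it could legitimately return INFEASIBLE even on a feasible instance. Your cost analysis for the discarded strategy is in the right spirit, but note that the paper decouples the data-dependent threshold $T$ from the concentration step by comparing to a strategy truncated at a \emph{fixed} quantile $t_s$, then uses two binomial-tail events to relate $T$ to $t_s$; this is what makes the Chernoff bound on the truncated cost go through cleanly and yields the $\tfrac{1}{\epsilon\alpha}$ rather than $\tfrac{1}{\epsilon^2\alpha^2}$ dependence.
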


In particular, for homogeneous problem instances, we \emph{optimize} over the optimal radius:
\begin{theorem}
\label{thm11a}
Suppose we have an efficiently generalizable, $\eta$-approximation for the polynomial-scenarios variant of any of the problems we study. Let $\mathcal{S}$ be the set of all potential black-box solutions its extension process may produce. Then, for any $\gamma, \epsilon, \alpha \in (0,1)$ and with $O\big{(}\frac{1}{\epsilon \alpha}  \log \big{(}\frac{nm|\mathcal{S}|}{\gamma}\big{)} \log  \big{(} \frac{nm}{\gamma} \big{)} \big{)}$ samples, we can compute a radius $R \leq R^*$ and black-box solution $F_I$, $F_A$ for all $A \in \mathcal{D}$ such that, with probability at least $1 - \gamma$, we have $$
\sum_{i \in F_I}c^{I}_i + \mathbb{E}_{A \sim \mathcal{D}}[\sum_{i \in F_A}c^{A}_i] \leq (1+\epsilon)B, \quad \text{and} \quad \Pr_{A \sim \mathcal{D}}[d(j, F_I \cup F_A) \leq \eta R] \geq 1 - \alpha, ~\forall j \in A.
$$
\end{theorem}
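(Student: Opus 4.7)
The plan is to reduce Theorem~\ref{thm11a} to Theorem~\ref{thm11} by doing a guess-and-check search over the radius $R$, sharing a single pool of scenario samples across all guesses. Since every distance in the instance lies in the finite set $\{d(i,j) : i \in \mathcal F, j \in \mathcal C\}$, the optimum $R^*$ must equal one of these at most $nm$ values, so I enumerate them in nondecreasing order as $R_1 \leq R_2 \leq \cdots \leq R_K$ with $K \leq nm$.

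First I would draw a single set $Q$ of $N$ scenarios from the oracle, where $N$ matches the bound claimed in the theorem (its sufficiency is verified below). Then, for each candidate $R_k$, I would run the efficiently generalizable polynomial-scenarios $\eta$-approximation on $Q$ with homogeneous radius $R_k$; this produces either a stage-I set $F_I^{(k)}$ together with its extension rule, or a declaration of infeasibility on the sample. Finally, I would output the solution corresponding to the smallest $R_k$ whose scenario-discarded sample cost (with threshold $T_k$ chosen as in Theorem~\ref{thm11}) is at most $(1+\epsilon)B$, using the associated extension rule to construct $F_A$ for arriving scenarios.

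For the analysis, observe that across all $K$ guesses the total family of black-box solutions the procedure can emit has size at most $nm\cdot|\mathcal S|$. Plugging this enlarged family into Theorem~\ref{thm11} replaces $|\mathcal S|$ by $nm\cdot|\mathcal S|$ inside the logarithm, which is absorbed into $\log(\tfrac{nm|\mathcal S|}{\gamma})$ up to a constant factor; hence the sample bound claimed in the present theorem already suffices. On the resulting good event (probability $\geq 1-\gamma$), the cost and coverage guarantees of Theorem~\ref{thm11} hold \emph{simultaneously} for every $R_k$ whose run did not declare infeasibility.

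It remains to argue that we return some $R \leq R^*$. At $R_k = R^*$ the true instance is feasible with expected opening cost at most $B$, so on the good event Theorem~\ref{thm11} rules out an INFEASIBLE output there, forcing the minimum successful $R_k$ to satisfy $R_k \leq R^*$. The main obstacle I anticipate is precisely this shared-sample argument: naively, one would draw independent samples per guess and pay a factor of $K$ in sample size, but because Theorem~\ref{thm11}'s guarantee is phrased uniformly over the entire solution family, the $nm$-fold enlargement of $\mathcal S$ can be absorbed at no cost beyond constants. Verifying that the proof of Theorem~\ref{thm11} really does give this uniform guarantee (rather than a per-instance one) is the key technical check needed for the reduction to go through.
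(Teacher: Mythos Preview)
Your proposal is correct and follows essentially the same approach as the paper: enumerate the at most $nm$ candidate radii, share a single sample pool across all guesses, and return the smallest radius for which the SAA procedure succeeds. The only cosmetic difference is how the union bound is organized. The paper applies Theorem~\ref{thm11} to each radius guess as a black box with error parameter $\gamma' = \gamma/(nm)$ and then union-bounds over the $nm$ guesses; you instead propose to enlarge the strategy family from $|\mathcal S|$ to $nm\cdot|\mathcal S|$ and argue directly from the proof of Theorem~\ref{thm11} (which is itself a union bound over $\mathcal S$). Both routes land on the same sample complexity, and the paper explicitly notes, as you do, that fresh samples per radius are unnecessary. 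Your flagged concern about whether the guarantee of Theorem~\ref{thm11} is uniform over the family is legitimate, and the paper sidesteps it precisely by the $\gamma \to \gamma/(nm)$ move, which lets Theorem~\ref{thm11} be invoked per-instance without opening its proof.
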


This adaptive thresholding may be of independent interest; for instance, it might be able to improve the sample complexity in the SAA analysis of \cite{Charikar2005}. By contrast, simpler non-adaptive approaches (e.g., $T = \frac{B}{\alpha}$) would have worse dependence on $\alpha$ and $\epsilon$ ($\frac{1}{\epsilon^2 \alpha^2}$ vs  $\frac{1}{\epsilon \alpha}$ as we achieve). 

We remark that if we make an additional assumption that the stage-II cost is at most some polynomial value $\Delta$, we can use standard SAA techniques without discarding scenarios; see Theorem~\ref{thm13} for full details. However, this assumption is stronger than is usually used in the literature for two-stage stochastic optimization.

In later sections, we follow up with some efficiently generalizable algorithms for a variety of problems, which we summarize as follows:
\begin{theorem}\label{intr-thm2}
We obtain the following efficiently generalizable algorithms
\begin{itemize}
    \item $3$-approximation for homogeneous \textbf{2S-Sup-Poly} with $|\mathcal{S}| \leq (n+1)!$.
    
    \item $5$-approximation for homogeneous \textbf{2S-MatSup-Poly} with $|\mathcal{S}| \leq 2^m$. 
    
    \item $5$-approximation for homogeneous \textbf{2S-MuSup-Poly}, with $|\mathcal{S}| \leq 2^m$ and runtime $\poly(n,m,\Lambda)$. 
    
    \item $11$-approximation for inhomogeneous \textbf{2S-MatSup-Poly}, with $|\mathcal{S}| \leq 2^m$.
\end{itemize}
\end{theorem}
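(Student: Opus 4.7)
The plan is to instantiate a common four-step template for each of the four bullets: (i)~guess the target radius (or use the given $R_j$ in the inhomogeneous case); (ii)~solve a natural two-stage LP relaxation with stage-I variables $x_i$ and per-scenario stage-II variables $y_i^A$, covering constraints $\sum_{i:\ d(i,j)\leq R_j}(x_i + y_i^A)\geq 1$ for every $j\in A$, a budget constraint $\sum_i c_i^I x_i + \sum_A p_A \sum_i c_i^A y_i^A \leq B$, and the relevant side constraints (matroid or multi-knapsack) on $x$; (iii)~round the stage-I vector $x$ to a feasible $F_I$; (iv)~define a deterministic \emph{extension rule} that, given $F_I$ and any scenario $A$ (in or out of $Q$), produces $F_A$ covering each $j\in A$ within $\eta R_j$. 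Step (iv) is what yields efficient generalizability: since the extension is a deterministic function of $(F_I,A)$, the set $\mathcal S$ of producible solutions is controlled by the number of distinct stage-I outcomes of the rounding.

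For the first bullet (homogeneous \textbf{2S-Sup-Poly}), I would adapt the Hochbaum--Shmoys $3$-approximation for \textbf{Knapsack-Supplier}. After guessing $R$ and solving the LP, I would scan clients in some permutation $\pi$ and greedily form a maximal $2R$-separated ``representative'' set $C^\star$; for each representative I would open the cheapest facility within its radius-$R$ ball. The stage-I facilities $F_I$ correspond to representatives with sufficient LP mass in $x$, and the extension for any $A$ simply reruns the greedy scan on the uncovered $j\in A$. The triangle inequality gives distance $\leq 3R$, and since the produced solution is parameterized by $\pi$ (with a sentinel empty-choice state), one obtains $|\mathcal S|\leq (n+1)!$.

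For the matroid and multi-knapsack variants in the homogeneous setting (second and third bullets), the same template applies, but $F_I$ must respect $\mathcal M$ or the $L$ knapsacks. I would contract each radius-$R$ ball around a greedily chosen representative into a supernode, then solve a two-stage LP on the contracted instance; for \textbf{2S-MatSup} I would use standard matroid-polytope rounding, which can move an opening from a representative's ball to another facility at distance $\leq 2R$ within its cluster, yielding overall ratio $3+2=5$. For \textbf{2S-MuSup}, because the multi-knapsack polytope is not integral, I would replace matroid rounding by a DP over the product $\Lambda=\prod_\ell W_\ell$, hence the $\poly(n,m,\Lambda)$ runtime; the radius analysis is unchanged and gives $5$. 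In both cases the extension is deterministic given $F_I\subseteq \mathcal F$, so $|\mathcal S|\leq 2^m$.

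The main obstacle is the inhomogeneous \textbf{2S-MatSup-Poly} (fourth bullet), where a uniform clustering radius no longer works: clients with widely different $R_j$ interact in the same cluster. My plan is to process clients in increasing order of $R_j$ when constructing the representative set, so that every client $j$ attached to a representative $j'$ satisfies $R_{j'}\leq R_j$; combined with the triangle inequality this keeps the per-client distance $O(R_j)$. One must then interleave this radius-ordered clustering with the matroid rounding (which adds its own radius blowup along matroid-intersection swap paths), and it is this interleaving, together with the accounting of stage-I versus stage-II openings at different scales, that multiplies the constants up to $11$. Bounding $|\mathcal S|\leq 2^m$ remains immediate from the determinism of the extension in $F_I$. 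I expect the radius-scale bookkeeping against the matroid rounding, rather than any LP step, to be the delicate point.
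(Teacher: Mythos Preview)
The central gap is in bullets 2--4. Your plan rounds the stage-I vector directly in the matroid (or multi-knapsack) polytope after clustering, but this ignores that the budget $\sum_i c_i^I x_i + \sum_A p_A \sum_i c_i^A y_i^A \le B$ is a \emph{separate} knapsack constraint sitting on top of $\mathcal M$. ``Standard matroid-polytope rounding'' makes $F_I$ independent in $\mathcal M$ but gives you no control over $c^I(F_I)$ or over which representatives are left uncovered and must be paid for in stage II. The paper's key device, which your plan is missing, is a reduction to a \emph{single-stage robust outliers} problem (\textbf{RW-Sup}): after clustering each scenario $A$ into representatives $H_A$, any representative $j$ not covered by $F_I$ forces a stage-II opening of cost $c^A_{i^A_j}$, so one sets an outlier penalty $v_j=\sum_{A:\,j\in H_A} p_A\, c^A_{i^A_j}$ and asks for $F_I\in\mathcal M$ with $\sum_{i\in F_I} w_i + \sum_{j:\,d(j,F_I)>R_j} v_j \le B$. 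A $\rho$-approximation for this outlier problem then yields a $(\rho+2)$-approximation for the two-stage problem (the $+2$ is the clustering detour). The $5,5,11$ come from $\rho=3$ for homogeneous RW-MatSup and RW-MuSup (via a solve-or-cut ellipsoid method from \cite{Chakra}, where the DP over $\Lambda$ enters as the separation oracle), and $\rho=9$ for inhomogeneous RW-MatSup via a new iterative-rounding algorithm that maintains client sets $C_0,C_1,C_s$ with a halving-radius invariant. Your ``interleave radius-ordered clustering with matroid rounding'' sketch does not reach this; the outlier reformulation is exactly what lets the budget and the matroid be handled together.

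For bullet 1 your outline is closer in spirit but still misses the mechanism. The paper runs \emph{two} greedy clusterings---one on all of $\mathcal C$ ordered by $y^I(G_j)$ to obtain $H_I$, and one on each $A$ ordered by $-y^I(G_{\pi^I j})$ to obtain $H_A$---and then sweeps a \emph{threshold} $\ell$ over the sorted values $y^I(G_{j_1})\le\cdots\le y^I(G_{j_h})$, opening $i^I_{j_k}$ in stage I for $k\ge\ell$ and $i^A_j$ in stage II for those $j\in H_A$ whose stage-I cluster falls below threshold. The cost is bounded by interpreting the threshold as a uniform $\beta\in[0,1]$ and comparing to the LP; your ``sufficient LP mass'' does not capture this correlated rounding, and without it the cost argument does not go through. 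Correspondingly, the $(n+1)!$ is $n!$ possible orderings of the values $y^I(G_j)$ times $n{+}1$ threshold positions, not a permutation of clients with a sentinel. Finally, the stage-II extension is \emph{not} ``rerun the greedy scan on uncovered $j\in A$ given $F_I$'': it depends on the stage-I LP values through the ordering $g^A(j)=-y^I(G_{\pi^I j})$, which is precisely what makes the $3R$ bound (rather than $5R$) provable for clients $j\notin H_A$.
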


The $3$-approximation for \textbf{2S-Sup-Poly} is presented in Section \ref{sec:3-apprx}, based on a novel LP rounding technqiue; notably, its approximation ratio matches the lower bound of the non-stochastic counterpart (\textbf{Knapsack Supplier}). 

The other three results are based on a reduction to a single-stage, deterministic robust outliers problem described in Section \ref{sec:mu-apprx}; namely, convert any $\rho$-approximation algorithm for the robust outlier problem into a $(\rho+2)$-approximation algorithm for the corresponding two-stage stochastic problem. This is similar to a robust supplier problem considered in \cite{bajpai2021revisiting} under the name \emph{priority center}, and many of the approximation algorithms of \cite{bajpai2021revisiting} can be adapted to our setting.

We follow up with $3$-approximations for the homogeneous robust outlier $\textbf{MatSup}$ and $\textbf{MuSup}$ problems, which are slight variations on algorithms of \cite{Chakra} (specifically, our approach in Section~\ref{sec:homogmumat} is a variation on their solve-or-cut methods). In Section~\ref{sec:5-apprx}, we describe a 9-approximation algorithm for an inhomogeneous $\textbf{MatSup}$ problem, which is an extension of results in \cite{Harris2018} and \cite{bajpai2021revisiting} (specifically, our method in Algorithm~\ref{alg-4} involves an iterative rounding approach similar to that of~\cite{Harris2018}). This new algorithm is intricate and may be of interest on its own.

\subparagraph*{Remark:} With our polynomial-scenarios approximation algorithms, the sample bounds of Theorem~\ref{thm11} for homogeneous \textbf{2S-Sup}, \textbf{2S-MatSup} and \textbf{2S-MuSup} instances are $\tilde O(\frac{ n  }{\epsilon \alpha}), \tilde O( \frac{m}{\epsilon \alpha})$ and $\tilde O(\frac{ m }{\epsilon \alpha})$ respectively. (Here, $\tilde O()$ hides $\polylog(n,m,1/\gamma)$ factors.)

There is an important connection between our generalization scheme and the design of our polynomial-scenarios approximation algorithms. In Theorem~\ref{thm11}, the sample bounds are given in terms of the \emph{cardinality} $|\mathcal S|$.  Our polynomial-scenarios algorithms are carefully designed to make $|\mathcal S|$ as small as possible. \emph{Indeed, one of the major contributions of this work is to show that effective bounds on $|\mathcal S|$  are possible for sophisticated approximation algorithms using complex LP rounding.}

 Following the lines of \cite{swamySAA}, it may be possible to replace this dependence with a notion of dimension of the underlying convex program. However, such general bounds would lead to \emph{significantly} larger complexities, consisting of very high order polynomials of $n$, $m$.

\subsection{Notation and Important Subroutines}
For $k \in \mathbb{N}$, we let $[k]$ denote $\{1,2,\hdots, k\}$. For a vector $\alpha = (\alpha_1, \alpha_2, \hdots, \alpha_k)$ and a subset $X \subseteq [k]$, we write $\alpha(X) = \sum_{i \in X}\alpha_i$.  For a client $j$ and radius demand $R_{j} \geq 0$, we define $G_{j} = \{ i \in \mathcal F: d(i,j) \leq R_{j} \}$. We also write $i^I_{j} = \argmin_{i \in G_{j}} c^I_i$ and $i^A_{j} = \argmin_{i \in G_{j}} c^A_i$ for any scenario $A$ and $j \in A$.    (Here, the radius $R_j$ is assumed from context.)

We repeatedly use a key subroutine named GreedyCluster(), shown in Algorithm~\ref{alg-1}. Its input is a set of clients $\mathcal{Q}$ with target radii $\mathbf{R} = (R_j)_{j\in\mathcal{Q}}$, and an ordering function $g: \mathcal{Q} \mapsto \mathbb{R}$. Its output is a set $H \subseteq \mathcal{Q}$ along with a mapping $\pi: \mathcal{Q} \mapsto H$; here, the intent is that $H$ should serve as a set of disjoint ``representatives" for the full set $\mathcal Q$. For readability, we write $\pi j$ as shorthand for $\pi(j)$.
 
\begin{algorithm}[H]
\begin{algorithmic}
\STATE {$H \gets \emptyset$}

\FOR {each $j \in \mathcal Q$ in decreasing order of $g(j)$ (breaking ties by index number)}
\STATE {$H \gets H \cup \{ j \}$}
\FOR {each $j' \in \mathcal{Q}$ with $G_{j} \cap G_{j'} \neq \emptyset$}
\STATE {$\pi j' \gets j, \mathcal{Q} \gets \mathcal{Q} \setminus \{j' \}$}
\ENDFOR
\ENDFOR
\RETURN $(H, \pi)$
\end{algorithmic}
\caption{$\text{GreedyCluster}(\mathcal{Q}, \mathbf{R}, g)$}\label{alg-1}
\end{algorithm}

\begin{observation}\label{filtering}
For $(H, \pi) = \text{GreedyCluster}(\mathcal{Q}, \mathbf{R}, g)$, the following two properties hold: (i) for all distinct pairs $j,j' \in H$, we have $G_{j} \cap G_{j'} = \emptyset$; and (ii)  for all $j \in \mathcal{Q}$ we have $G_{j} \cap G_{\pi j} \neq \emptyset$, $d(j,\pi j)\leq 2R_{j}$, and $g(\pi j) \geq g(j)$.
\end{observation}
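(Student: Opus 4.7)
The proof is essentially bookkeeping about the order in which elements are added to $H$ and removed from $\mathcal{Q}$, so I would structure it as two short arguments corresponding to the two parts.

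For part (i), the plan is to observe that the outer loop enumerates elements of $\mathcal{Q}$ in a fixed order, and that whenever an element $j$ is inserted into $H$, the inner loop immediately removes from $\mathcal{Q}$ every $j'$ (including $j$ itself) with $G_{j} \cap G_{j'} \neq \emptyset$. So, for any distinct $j, j'' \in H$, say $j$ is added first. If we had $G_{j} \cap G_{j''} \neq \emptyset$, then $j''$ would have been removed from $\mathcal{Q}$ at that step and thus could not have been added to $H$ later, a contradiction.

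For part (ii), the plan is to fix $j \in \mathcal{Q}$ and track when it is removed. The key point is that $\pi(j)$ is set exactly once, at the moment some $j^* \in H$ is processed and the inner loop finds $G_{j^*} \cap G_{j} \neq \emptyset$; this $j^*$ then becomes $\pi j$. The condition $G_{j} \cap G_{\pi j} \neq \emptyset$ is thus immediate from the loop guard. For the inequality $g(\pi j) \geq g(j)$: since $j$ is still in $\mathcal{Q}$ when $\pi j$ is processed (otherwise $\pi j$ would never see $j$), the outer loop ordering by decreasing $g$ (with ties broken by index) forces $g(\pi j) \geq g(j)$. Finally, for the distance bound, pick any $i \in G_{j} \cap G_{\pi j}$; by the definition of the neighborhoods and the triangle inequality,
\[
d(j, \pi j) \leq d(j, i) + d(i, \pi j) \leq R_{j} + R_{\pi j},
\]
which specializes to $2R$ in the homogeneous setting that the observation implicitly assumes here.

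There is no real obstacle; the only subtlety worth double-checking is that $\pi(j)$ is defined for every $j \in \mathcal{Q}$, which follows because when $j$ itself is processed (if it has not been removed earlier), it joins $H$ and the inner loop sets $\pi(j) \gets j$, giving $G_j \cap G_{\pi j} = G_j \neq \emptyset$ and $g(\pi j) = g(j)$. Together with the one-time assignment in the inner loop, this also confirms that $\pi$ is well-defined as a function, and both clauses hold uniformly for all $j \in \mathcal{Q}$.
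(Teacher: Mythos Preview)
The paper states this as an \emph{Observation} with no accompanying proof; it is meant to be read off directly from the pseudocode of \texttt{GreedyCluster}. Your argument is correct and supplies exactly the bookkeeping the paper leaves implicit, including the small points that $\pi$ is well-defined (each client is removed from $\mathcal{Q}$ the moment $\pi$ is assigned to it, and $G_j \neq \emptyset$ by the standing assumption $d(j,\mathcal{F}) \leq R_j$) and that the bound $d(j,\pi j)\leq 2R$ is the homogeneous specialization of $d(j,\pi j)\leq R_j + R_{\pi j}$.
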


\section{Generalizing to the Black-Box Setting}
\label{sec:Gen}

Let us consider a two-stage problem $\mathcal{P}$, with polynomial-scenarios variant $\mathcal{P}$-\textbf{Poly} and black-box variant $\mathcal{P}$-\textbf{BB}. We denote a problem instance  by the tuple $\mathfrak I = (\mathcal{C}, \mathcal{F},\mathcal{M}_I, \mathcal D, B, \mathbf{R})$, where $\mathcal{C}$ is the set of clients, $\mathcal{F}$ the set of facilities $i$, each with stage-I cost $c^I_i$, $\mathcal{M}_I \subseteq 2^{\mathcal{F}}$ the set of legal stage-I openings (representing the stage-I specific constraints of $\mathcal{P}$), $\mathcal D$ is the distribution (provided either explicitly or as an oracle), $B$ the budget, and $\mathbf{R}$ the vector of radius demands.

We suppose we also have an $\eta$-approximation algorithm $\text{Alg}\mathcal{P}$ for $\mathcal{P}$-\textbf{Poly}.
\begin{definition}\label{str-def}
We define a \emph{strategy} $s$ to be a $(|\mathcal D|+1)$-tuple of facility sets $(F^s_I, F^s_{A})$, where $F^s_I \in \mathcal M_I$ (i.e. it is a feasible stage-I solution) and where $A$ ranges over $\support(\mathcal D)$. The set $F^s_I$ represents the facilities opened in stage-I, and  $F^s_A$ denotes the facilities opened in stage-II, when the arriving scenario is $A$.   The strategy may be listed explicitly (if $\mathcal D$ is listed explicitly, in the polynomial-scenarios model) or implicitly.

For a strategy $s$ and scenario $A$, we write $C^{II}(s,A)$ for $c^A(F^s_A)$ and $C^I(s)$ for $c^I(F_s^I)$ for brevity. We also define $C(s,A)= C^I(s) + C^{II}(s,A)$ and $\maxdist(s,A) = \max_{j \in A} d(j, F_I^s \cup F_A^s) / R_j$.
\end{definition}

\begin{assumption}\label{asm-1}
In the black-box model, we assume that for any strategy $s$ under consideration, the random variable $C(s,A)$ for $A \sim \mathcal{D}$ has a continuous CDF. This assumption is without loss of generality: we may simply add a dummy facility $i'$ whose stage-II cost is an infinitesimal smooth random variable, and which always gets opened in every scenario.  \emph{Note that this  assumption implies that for a finite set of scenarios $Q$, all values $C(s,A)$ for $A \in Q$ are distinct with probability one}.
\end{assumption}

\begin{definition} 
We say that the instance $\mathfrak I = (\mathcal{C}, \mathcal{F},\mathcal{M}_I, \mathcal D, B, R)$ is \emph{feasible} for $\mathcal{P}$ if there is a strategy $s^*$ satisfying:
$$
\mathbb E_{A \sim D}[ C(s^*,A) ] \leq B, \qquad \textstyle{\max_{A \in \mathcal D}} \maxdist(s^*, A) \leq 1
$$
We say that strategy $s^*$ is \emph{feasible} for instance $\mathfrak{I}$.
\end{definition}

\begin{definition}\label{poly-feas}
An algorithm $\text{Alg}\mathcal{P}$ is an \emph{$\eta$-approximation algorithm} for $\mathcal{P}$-\textbf{Poly}, if given any problem instance $\mathfrak J = (\mathcal{C}, \mathcal{F}, \mathcal{M}_I, Q, \mathcal D, B, R)$, it satisfies the following conditions:
\begin{enumerate}[label=\textbf{A\arabic*}]
    \item Its runtime is polynomial in its input size, in particular, on the number of scenarios $N = |\mathcal D|$.
    \item \label{poly-appr-1} It either returns a strategy $s$ with $\mathbb E_{A \sim \mathcal D}[ C(s, A) ] \leq B$ and  $\textstyle{\max_{A \in \mathcal D}} \maxdist(s, A) \leq \eta$, or the instance in infeasible and it returns INFEASIBLE.
\end{enumerate}
\end{definition}

(Note that if the original instance is infeasible, it is possible, and allowed,  for $\text{Alg}\mathcal{P}$ to return a strategy $s$ satisfying the conditions of \ref{poly-appr-1}.)

\begin{definition}\label{eff-gen-def}
An $\eta$-approximation algorithm $\text{Alg}\mathcal{P}$ for $\mathcal{P}$-\textbf{Poly} is \emph{efficiently generalizable}, if for every instance $\mathfrak J = (\mathcal{C}, \mathcal{F}, \mathcal{M}_I, \mathcal D', B, R)$ and for every distribution $\mathcal D'$ with $\support(\mathcal D') \subseteq \support(\mathcal D)$, there is an efficient procedure to extend $s$ to a strategy $\bar{s}$ on $\mathcal D$, satisfying:
\begin{enumerate}[label=\textbf{S\arabic*}]
    \item \label{ef-gen-2} $\textstyle{\max_{A \in \mathcal D}} \maxdist(\bar s, A) \leq \eta$.
    
    \item \label{ef-gen-1} Strategies $s$ and $\bar s$ agree on $\mathcal D'$, i.e. $F^{\bar{s}}_I = F^s_I$ and $F^{\bar{s}}_{A} = F^s_{A}$ for every $A \in \support(\mathcal D')$.
    
    \item \label{ef-gen-3} The set $\mathcal S$  of possible strategies $\bar s$ that can be achieved from any such $\mathcal D'$ satisfies $|\mathcal{S}| \leq \psi$, where $\psi$ is some fixed known quantity (irrespective of $\mathfrak J$ itself).
\end{enumerate}
\end{definition}

\subsection{Overview of the Generalization Scheme}
We next describe our generalization scheme based on discarding scenarios. As a warm-up exercise, let us first describe how standard SAA results could be applied \emph{if we have bounds on the stage-II costs}.

\begin{theorem}\label{thm13}
Suppose the stage-II cost is at most $\Delta$ for any scenario and any strategy, and suppose we have an efficiently generalizable, $\eta$-approximation algorithm $\text{Alg}\mathcal P$, and let $\gamma, \epsilon \in (0,1)$. Then there is an algorithm that uses $N = O \bigl( \frac{\Delta}{\epsilon^2}  \log \bigl( \frac{ |\mathcal S|}{\gamma} \bigr) \bigr)$ samples from $\mathcal D$ and, with probability at least $1 - \gamma$, it satisfies one of the following two conditions:  
\begin{enumerate}
\item[(i)] it outputs a strategy $\bar s$ with $\mathbb E_{\mathcal A \sim \mathcal D}[ C( \bar s, A) ] \leq (1+\epsilon) B$ and $\textstyle{\max_{A \in \mathcal D}} \maxdist(s, A) \leq \eta$.
\item[(ii)] it outputs INFEASIBLE and $\mathfrak I$ is also infeasible for $\mathcal{P}$-\textbf{BB}.
\end{enumerate}
\end{theorem}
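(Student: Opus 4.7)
The plan is to follow the classical sample-average approximation scheme, combined with a union bound over the a~priori family $\mathcal S$. First I would draw $N$ independent samples $Q = \{A_1,\dots,A_N\}$ from the oracle and form the empirical distribution $\mathcal D'$ that places mass $1/N$ on each $A_i$. Then I would run $\text{Alg}\mathcal P$ on the polynomial-scenarios instance $\mathfrak J' = (\mathcal C, \mathcal F, \mathcal M_I, \mathcal D', B', R)$ with a slightly inflated budget $B' = (1+\epsilon/3)B$. If the algorithm returns INFEASIBLE, I output INFEASIBLE; otherwise I apply the efficient-generalization procedure to the returned strategy $s$ to obtain $\bar s$ on the full $\mathcal D$, and output $\bar s$.

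The radius guarantee $\max_{A \in \mathcal D}\maxdist(\bar s, A) \leq \eta$ will be immediate from property~\ref{ef-gen-2}. All the work goes into the cost bound, via concentration plus a union bound. For any \emph{fixed} strategy $\sigma$, the random variable $C(\sigma, A)$ for $A \sim \mathcal D$ is the sum of a deterministic first-stage term and a stage-II term lying in $[0,\Delta]$; a multiplicative Chernoff bound then yields
$$
\Pr\bigl[\,\bigl|\mathbb{E}_{A \sim \mathcal D'}[C(\sigma,A)] - \mathbb{E}_{A \sim \mathcal D}[C(\sigma,A)]\bigr| > \tfrac{\epsilon}{3} B\,\bigr] \;\leq\; 2\exp\!\bigl(-\Omega(N \epsilon^2 B / \Delta)\bigr),
$$
so the claimed choice $N = \Theta(\Delta \epsilon^{-2} \log(|\mathcal S|/\gamma))$ (after the usual normalization that absorbs $B$) drives the per-strategy failure probability down to $\gamma/(2|\mathcal S|)$. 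A union bound over all $\sigma \in \mathcal S$ gives two-sided closeness for every candidate extension simultaneously, and a single additional application to any hypothetical feasible witness $s^*$ of $\mathfrak I$ gives $\mathbb{E}_{\mathcal D'}[C(s^*, A)] \leq B'$ whenever $\mathfrak I$ is actually feasible.

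These two concentration events will imply both outcomes of the theorem. If $\mathfrak I$ is feasible, then w.h.p.\ $s^*$ witnesses feasibility of $\mathfrak J'$ under budget $B'$, so $\text{Alg}\mathcal P$ cannot return INFEASIBLE; the returned $s$ satisfies $\mathbb{E}_{\mathcal D'}[C(s,A)] \leq B'$, and since $\bar s$ agrees with $s$ on $\mathcal D'$ by~\ref{ef-gen-1}, the union-bound event transfers this to $\mathbb{E}_{\mathcal D}[C(\bar s, A)] \leq B' + \tfrac{\epsilon}{3}B \leq (1+\epsilon)B$, giving outcome~(i). If $\text{Alg}\mathcal P$ instead returns INFEASIBLE, the contrapositive of the Chernoff bound on $s^*$ shows that no feasible $s^*$ for $\mathfrak I$ can exist, giving outcome~(ii).

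The main obstacle I expect is the usual one in SAA: the extension $\bar s$ is data-dependent, chosen after seeing $Q$, so naive concentration cannot be applied to $\bar s$ as if it were fixed. The role of the cardinality condition~\ref{ef-gen-3} is precisely to force $\bar s$ to land in an \emph{a priori} fixed universe $\mathcal S$ of size at most $\psi$; this is what licenses the union bound and leaves only a $\log|\mathcal S|$ factor in the sample complexity. The remaining pieces---that $\bar s$ and $s$ agree on $\mathcal D'$, that the radius is automatically bounded on $\mathcal D\setminus Q$ by the extension, and that the INFEASIBLE verdict propagates back correctly---are exactly what the definitions of efficient generalizability and property~\ref{poly-appr-1} are designed to supply.
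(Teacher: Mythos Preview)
Your proposal is correct and follows essentially the same route as the paper's (sketch) proof: sample, run $\text{Alg}\mathcal P$ on the empirical distribution with an inflated budget $(1+\epsilon/3)B$, extend via the generalization procedure, and control the cost by a Hoeffding/Chernoff bound union-bounded over the a~priori family $\mathcal S$ together with one extra application to a hypothetical optimal $s^*$. Your write-up is in fact more explicit than the paper's about why the union bound over $\mathcal S$ is what defeats the data-dependence of $\bar s$, and about using property~\ref{ef-gen-1} to identify the empirical cost of $\bar s$ with that of $s$.
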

\begin{proof}[Proof (Sketch)] Sample $N$ scenarios $Q = \{A_1, \dots, A_N \}$ from $\mathcal D$, and let $\mathcal D'$ be the empirical distribution. For any fixed strategy, the average cost of $s$ on $\mathcal D'$, i.e. $\tfrac{1}{N} C(s,A_i)$, is a sum of independent random variables with range at most $\Delta$. By standard concentration arguments, for $N = \Omega( \frac{ \Delta}{\epsilon^2} \log(1/\delta) )$, this will be within a $(1 \pm \epsilon/3)$ factor of its underlying expected cost, with probability at least $1 - \delta$. In particular, for the stated value $N$, there is a probability of at least $1 - \gamma$ that this holds for every strategy $s \in \mathcal S$, as well as for some feasible strategy $s^*$ for $\mathcal{P}$-\textbf{BB}.

Since $s^*$ has cost at most $(1+\epsilon/3) B$ on $\mathcal D'$, we can run $\text{Alg}\mathcal{P}$ with a budget $(1+\epsilon/3) B$, and it will return a strategy $s$ with cost $(1+\epsilon/3) B$ on $\mathcal D'$. This can be extended to a strategy $\bar s \in \mathcal S$, whose cost on $\mathcal D'$ is at most $(1+\epsilon/3)$ times its expected cost on $\mathcal D$, i.e. the cost of $\bar s$ is at most $(1+\epsilon/3)^2 B \leq (1+\epsilon) B$.
\end{proof}

We wish to avoid to avoid this dependence on $\Delta$. The first step of our generalization method, as in Theorem~\ref{thm13}, is to sample a set $Q$ of scenarios from $\mathcal D$, and then apply the $\text{Alg}\mathcal{P}$ on the empirical distribution, i.e. the uniform distribution on $Q$. The extension procedure of $\text{Alg}\mathcal{P}$ yields a strategy $\bar s$ for the entire distribution $\mathcal D$. Instead of using this strategy directly, we then modify it as follows: based on sample set $Q$, we find a threshold $T$ so that $C^{II}(s, A) > T$ for exactly $\alpha N$ samples. If a newly-arriving  $A$ has $C^{II}(\bar s, A)  > T$, we perform no stage-II opening. We let $\hat s$ denote this modified strategy, i.e. $F^{\hat s}_I = F^{\bar s}_I$ and $F^{\hat s}_A = \emptyset$ when $C^{II}(\bar s, A) > T$, otherwise $F^{\hat s}_A = F^{\bar s}_A$.

See Algorithm~\ref{alg-6} for details:

\begin{algorithm}[H]
\begin{algorithmic}
\REQUIRE {Parameters $\epsilon, \gamma, \alpha \in (0,1)$, $N \geq 1$ and a $\mathcal{P}$-\textbf{BB} instance $\mathfrak I = (\mathcal{C}, \mathcal{F},\mathcal{M}_I, \mathcal D, B, R)$.}
\FOR {$h=1, \dots, \lceil \log_{13/12} (1/\gamma) \rceil$}
\STATE {Draw $N$ independent samples from the oracle, obtaining set $Q = \{S_1, \hdots, S_N\}$}
\STATE {Set $\mathcal D'$ to be the uniform distribution on $Q$, i.e. each scenario $S_i$ has probability $1/N$}
\IF {$\text{Alg}\mathcal{P}(\mathcal{C}, \mathcal{F},\mathcal{M}_I, \mathcal D', (1+\epsilon)B, R)$ returns strategy $s$}
\STATE {Let $T$ be the $\lceil \alpha N \rceil^{\text{th}}$ largest value of $C^{II}(s,A)$ among all scenarios $A \in Q$}
\STATE {Use generalization procedure to obtain strategy $\bar s$ from $s$}
\STATE {Form strategy $\hat s$ by discarding scenarios $A$ with $C^{II}(s,A) > T$}
\RETURN $\hat s$\; 
\ENDIF
\ENDFOR
\RETURN INFEASIBLE
\end{algorithmic}
\caption{SAA Method for $\mathcal{P}$-\textbf{BB}.}\label{alg-6}
\end{algorithm}

In the following analysis, we will show that, with probability $1 - O(\gamma)$, the resulting strategy $\hat{s}$ has two desirable properties:  (1) its expected cost (taken over all scenarios in
$\mathcal{D}$) is at most $(1+2\epsilon)B$, and (2) it covers at least a $1-2\alpha$ fraction
(of probability mass) of scenarios of $\mathcal{D}$ (where by ``cover'' we mean
that all clients have an open facility within distance $\eta R_j$).

\subsection{Analysis of the Generalization Procedure}

The analysis will use a number of standard variants of Chernoff's bound. It also uses a somewhat more obscure concentration inequality of Feige \cite{feige}, which we quote here:
\begin{theorem}[\cite{feige}]
\label{feige-thm}
Let $X_1, \dots, X_n$ be nonnegative independent
random variables, with expectations $\mu_i = \mathbb E[X_i]$, and let $X = \sum_i X_i$. Then, for any $\delta > 0$, there holds
$$
\Pr( X < \mathbb E[X] + \delta) \geq \min \{ \delta/(1+\delta), 1/13 \}
$$
\end{theorem}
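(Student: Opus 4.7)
I would prove the theorem via a two-regime argument, with Markov's inequality handling the easy case and a convexity-based reduction plus combinatorial analysis handling the hard case.

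The easy regime is $\mu := \mathbb{E}[X] \leq 1$. Here Markov's inequality yields $\Pr[X > \mu + \delta] \leq \mu/(\mu + \delta)$, hence $\Pr[X \leq \mu + \delta] \geq \delta/(\mu + \delta) \geq \delta/(1+\delta)$, which matches the first branch of the claimed bound and uses nothing about independence.

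The substantive case is $\mu > 1$. First, by rescaling $X_i \mapsto X_i/\delta$, I reduce to $\delta = 1$ and the target $\Pr[X \leq \mu + 1] \geq 1/13$. Second, I perform an extreme-point reduction: among all product distributions with fixed marginal means $\mu_i$, the infimum of $\Pr[X \leq \mu + 1]$ is attained when each $X_i$ is supported on at most two points, one of which may be taken to be $0$. This follows because every nonnegative distribution with mean $\mu_i$ can be written as a mixture of two-point distributions with the same mean, and the quantity of interest is linear (hence extremized at an extreme point) in each marginal with the others held fixed. Third, analyze the resulting two-point case: each $X_i$ equals $a_i$ with probability $\mu_i/a_i$ and $0$ otherwise, so $X = \sum_i a_i B_i$ for independent Bernoullis $B_i$ with $\mathbb{E}[B_i] = \mu_i/a_i$.

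The main obstacle is step three: bounding $\Pr\bigl[\sum_i a_i B_i \leq \mu + 1\bigr] \geq 1/13$ uniformly over all admissible configurations $(a_i, \mu_i)$ with $\sum_i \mu_i = \mu$. The bound is easy at both extremes: if all $a_i$ are very large, then $\prod_i (1-\mu_i/a_i)$ (the probability every $B_i$ vanishes) stays bounded away from $0$; if all $a_i$ are $O(1)$, standard Chernoff concentration already gives a much better constant than $1/13$. The delicate case is the intermediate regime, in which several $a_i$ are of moderate size. Here one must partition the coordinates into a ``heavy'' block and a ``light'' block at some carefully tuned threshold, apply Markov on the heavy block to show its contribution vanishes with constant probability, and apply a Chernoff- or Bennett-type concentration on the bounded light block. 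Feige's original argument pins down the explicit constant $1/13$ by identifying and numerically verifying the extremal configuration over this combined bound; sharpening the constant is possible but unnecessary for our SAA application.
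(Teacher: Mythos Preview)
The paper does not prove this theorem at all: it is quoted from \cite{feige} and used as a black box in Lemma~\ref{s6-lem-feas}. There is therefore no paper proof to compare your sketch against.

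Two remarks are still worth making. First, the statement as printed here silently omits the hypothesis $\mu_i \le 1$ for each $i$ that appears in Feige's original; without it the claim is false already for $n=1$ (take $X_1=\mu+\delta$ with probability $\mu/(\mu+\delta)$ and $0$ otherwise, giving $\Pr[X<\mu+\delta]=\delta/(\mu+\delta)$, which for large $\mu$ drops below $1/13$). The application in Lemma~\ref{s6-lem-feas} does satisfy $\mu_i\le 1$ after the rescaling $B=1$, so this is only a cosmetic slip in the statement.

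Second, your case split on the \emph{total} mean $\mu$ does not line up with the two branches of the $\min$. In the regime $\mu>1$ with $\delta$ small, the target is $\delta/(1+\delta)$, not $1/13$; your rescaling to $\delta=1$ gives only the $1/13$ bound, and Markov gives only $\delta/(\mu+\delta)$, neither of which suffices. Feige's actual argument fixes $\delta=1$ from the outset (under $\mu_i\le 1$), and the $\min$ formulation is a derived convenience obtained by rescaling when $\delta\ge 1$ and by Markov when the means are small. With that correction, your extreme-point reduction to two-point distributions and the heavy/light decomposition are indeed the shape of Feige's proof.
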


We now begin the algorithm analysis.
\begin{lemma}\label{s6-lem-feas}
If instance $\mathfrak I$ is feasible for $\mathcal{P}$-\textbf{BB} and $N \geq 1/\epsilon$, then Algorithm~\ref{alg-6} terminates with INFEASIBLE with probability at most $\gamma$.
\end{lemma}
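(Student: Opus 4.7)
The plan is to show that in each pass of the outer loop of Algorithm~\ref{alg-6}, the polynomial-scenarios algorithm $\text{Alg}\mathcal{P}$ returns a strategy (rather than INFEASIBLE) with probability at least $1/13$. Since the $\lceil \log_{13/12}(1/\gamma) \rceil$ passes draw independent sample batches, the probability that every pass fails, and hence that Algorithm~\ref{alg-6} outputs INFEASIBLE, is at most $(12/13)^{\log_{13/12}(1/\gamma)} = \gamma$, as required.

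Fix a single iteration, and let $s^\star$ be a strategy witnessing feasibility of $\mathfrak I$ for $\mathcal{P}$-\textbf{BB}, so $C^I(s^\star) + \mathbb E_{A \sim \mathcal D}[C^{II}(s^\star, A)] \leq B$ and $\maxdist(s^\star, A) \leq 1$ for every $A \in \support(\mathcal D)$. Because $\support(\mathcal D') \subseteq \support(\mathcal D)$ deterministically, the radius condition $\maxdist(s^\star, A) \leq 1$ is automatically inherited on $\mathcal D'$. By property~\ref{poly-appr-1}, $\text{Alg}\mathcal{P}$ will therefore return some strategy whenever $s^\star$ also satisfies the relaxed budget on the empirical instance, i.e.\ whenever $C^I(s^\star) + \frac{1}{N}\sum_{i=1}^{N} C^{II}(s^\star, S_i) \leq (1+\epsilon)B$. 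Hence the iteration-level task is to lower-bound the probability of this single empirical-budget event by $1/13$.

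I would apply Feige's inequality (Theorem~\ref{feige-thm}) after rescaling by the budget: set $X_i := C^{II}(s^\star, S_i)/B$, giving i.i.d.\ nonnegative variables with common mean $\mu := \mathbb E_{\mathcal D}[C^{II}(s^\star, A)]/B \leq 1 - C^I(s^\star)/B \leq 1$. Feige with slack $\delta := N\epsilon$ gives $\Pr\bigl(\sum_i X_i < N\mu + N\epsilon\bigr) \geq \min\{N\epsilon/(1+N\epsilon),\, 1/13\} = 1/13$, where the last equality uses $N\epsilon \geq 1$, which is exactly the hypothesis $N \geq 1/\epsilon$. Unwinding the rescaling and using $N\mu \leq N(B - C^I(s^\star))$, this event coincides with the required empirical budget certificate. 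The main subtlety is arranging the Feige slack to match the $N\epsilon B$ additive room we have: the constant-$1/13$ plateau in the inequality only takes over once the slack is on the same scale as the mean, which is precisely why rescaling by $B$ and using $N\epsilon \geq 1$ are both needed at the same time.
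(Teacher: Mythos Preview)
Your proof is correct and follows essentially the same approach as the paper: apply Feige's inequality to the (rescaled) per-sample costs of a feasible strategy $s^\star$ to show each iteration succeeds with probability at least $1/13$, then use independence across the $\lceil \log_{13/12}(1/\gamma)\rceil$ iterations. The only cosmetic difference is that you isolate the deterministic stage-I cost and apply Feige to the stage-II part, whereas the paper applies it to the total cost after normalizing $B=1$; these are equivalent.
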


\begin{proof}
By rescaling, we assume wlog that $B=1$. If $\mathfrak I$ is feasible, there exists some feasible strategy $s^\star$. Now, for any specific iteration $h$ in Algorithm \ref{alg-6}, let $X_v$ be the total cost of $s^{\star}$ on sample $S_v$. The random variables $X_v$ are independent, and the average cost of $s^{\star}$ on $\mathcal D'$ is $Y = \frac{1}{N}\sum^N_{v=1} X_v$. As $s^\star$ is feasible for $\mathfrak I$ we have $\mathbb{E}[X_v] = \mathbb E_{A \sim \mathcal D} [C(s^{\star},A)] \leq 1$. By Theorem~\ref{feige-thm}, we thus have:
\begin{align}
\Pr\Big{[} \sum^N_{v=1}X_v < \mathbb{E}\big{[}\sum^N_{v=1}X_v\big{]} + \epsilon N \Big{]} \geq \min \big{\{}\frac{\epsilon  N}{1 + \epsilon N}, \frac{1}{13} \big{\}}  \notag
\end{align}
When $N \geq 1/\epsilon$, we have $\epsilon N /(1 + \epsilon N) \geq 1/13$. Hence, with probability at least $1/13$, we get:
\begin{align*}
Y \leq \frac{1}{N}\sum^N_{v=1}\mathbb{E}[X_v] + \epsilon B \leq \mathbb E_{A \sim \mathcal D}[C(s^{\star}, A)]  + \epsilon B \leq (1+\epsilon)B
\end{align*}

When this occurs, then strategy $s^\star$ witnesses that $(\mathcal{C}, \mathcal{F},\mathcal{M}_I, \mathcal D', (1+\epsilon)B, R)$ is feasible for $\mathcal{P}$. Since $\text{Alg}\mathcal{P}$ is an $\eta$-approximation algorithm, the iteration terminates successfully. Repeating for $\lceil \log_{13/12}(1/\gamma) \rceil$ iterations brings the error probability down to at most $\gamma$. 
\end{proof}

\begin{proposition}\label{probE}
For any $\gamma,\alpha \in (0,1)$ and $N = O\big{(}\frac{1}{\alpha}  \log (\frac{\psi}{\gamma}) \big{)}$, there is a probability of at least $1 - \gamma$ that the algorithm outputs either INFEASIBLE, or the strategy $\bar s$ and threshold $T$ satisfy
\[
\Pr_{A \sim D} ( C^{II}(\bar s,A) > T) \leq 2 \alpha
\]
\end{proposition}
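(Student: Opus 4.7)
The plan is to prove the proposition by a union bound over the fixed set $\mathcal{S}$ of potential generalized strategies, leveraging property \ref{ef-gen-3} which bounds $|\mathcal{S}| \leq \psi$ \emph{independently} of the random sample $Q$. For each $\bar s \in \mathcal{S}$ I will introduce the ``true $2\alpha$-quantile''
\[
\tau^\star(\bar s) := \inf \{ \tau \in \mathbb{R} : \Pr\nolimits_{A \sim \mathcal{D}}[ C^{II}(\bar s, A) > \tau ] \leq 2\alpha \}.
\]
Since Assumption~\ref{asm-1} yields a continuous CDF for $C(\bar s, A)$ and $C^{I}(\bar s)$ is deterministic given $\bar s$, the variable $C^{II}(\bar s, A)$ also has a continuous CDF, so $\Pr_{A \sim \mathcal{D}}[ C^{II}(\bar s, A) > \tau^\star(\bar s) ] = 2\alpha$ exactly. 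The task then reduces to showing that with probability at least $1 - \gamma$, the algorithm's threshold $T$ satisfies $T \geq \tau^\star(\bar s)$ for whichever $\bar s$ it returns, since then $\{ A : C^{II}(\bar s, A) > T \} \subseteq \{ A : C^{II}(\bar s, A) > \tau^\star(\bar s) \}$ yields the desired inequality.

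Fixing an iteration with sample $Q$ and a strategy $\bar s \in \mathcal{S}$, I consider the count $Y(\bar s) := |\{ A \in Q : C^{II}(\bar s, A) > \tau^\star(\bar s) \}|$, a sum of $N$ i.i.d.\ Bernoulli$(2\alpha)$ variables with mean $2\alpha N$. A standard multiplicative Chernoff lower-tail bound gives $\Pr[ Y(\bar s) < \lceil \alpha N \rceil ] \leq e^{-\alpha N / 4}$, provided $\alpha N$ exceeds a small constant. Choosing $N = \Theta\bigl(\tfrac{1}{\alpha} \log(\psi/\gamma)\bigr)$ with a large enough hidden constant makes this at most $\gamma / (\psi \lceil \log_{13/12}(1/\gamma) \rceil )$; the extra $\log \log (1/\gamma)$ from the iteration count is absorbed into $\log(\psi/\gamma)$. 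A union bound over all $\bar s \in \mathcal{S}$ and all $\lceil \log_{13/12}(1/\gamma) \rceil$ iterations then produces the ``good event'' $\mathcal{E}$: $Y(\bar s) \geq \lceil \alpha N \rceil$ for every $\bar s \in \mathcal{S}$ in every iteration, with probability at least $1 - \gamma$.

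Conditioned on $\mathcal{E}$, the algorithm either terminates with INFEASIBLE (which trivially satisfies the statement) or returns some $\bar s \in \mathcal{S}$ together with threshold $T$. Property \ref{ef-gen-1} ensures that $\bar s$ and $s$ agree on $Q$, so $C^{II}(\bar s, A) = C^{II}(s, A)$ for every $A \in Q$, and $T$ is therefore also the $\lceil \alpha N \rceil^{\text{th}}$-largest value of $C^{II}(\bar s, A)$ across $Q$. Event $\mathcal{E}$ guarantees at least $\lceil \alpha N \rceil$ samples have $C^{II}(\bar s, A) > \tau^\star(\bar s)$, so the $\lceil \alpha N \rceil^{\text{th}}$-largest such value, namely $T$, must itself be $\geq \tau^\star(\bar s)$, completing the argument. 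The main subtlety to get right is that $\mathcal{S}$ is fixed in advance of drawing $Q$, which is precisely what property \ref{ef-gen-3} secures; this is what keeps the union-bound cost only $\log \psi$ and hence keeps the sample complexity at $\tilde O(1/\alpha)$ rather than something depending polynomially on the (potentially huge) support of $\mathcal{D}$.
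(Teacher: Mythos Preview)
Your proof is correct and follows essentially the same approach as the paper's: define the true $2\alpha$-quantile $t_{\bar s}$ for each strategy in $\mathcal{S}$, apply a Chernoff bound to the Binomial$(N,2\alpha)$ count of samples exceeding it, and take a union bound over $\mathcal{S}$ and the $O(\log(1/\gamma))$ iterations. Your explicit invocation of property~\ref{ef-gen-1} to justify that the threshold $T$ (defined via $s$) coincides with the $\lceil \alpha N\rceil$-th largest value of $C^{II}(\bar s,\cdot)$ on $Q$ is a clarification the paper leaves implicit.
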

\begin{proof}
For any strategy $s$, let $t_s$ be the threshold  value $t_{s}$ such that $\Pr_{A \sim \mathcal D}( C^{II}(s,A) > t_{s}) = 2 \alpha$; this is well-defined by Assumption~\ref{asm-1}. For a strategy $s$ and iteration $h$ of Algorithm~\ref{alg-6}, let $E_{h,s}$ denote the event that $Q$ contains at most $\alpha N$ samples with stage-II cost exceeding $t_s$. We claim that, if no such event $E_{h,s}$ occurs, then we get the desired property in the proposition.  For, suppose the chosen strategy has $\Pr_{A \sim \mathcal D} ( C^{II}(\bar s,A) > T) > 2 \alpha$. So necessarily $T \leq t_{\bar s}$. Since $Q$ contains $\alpha N$ samples with cost exceeding $T$, it also contains at most $\alpha N$ samples with stage-II cost exceeding $t_{\bar s}$, and hence bad-event $E_{h, \bar s}$ has occurred.

Now let us bound the probability of $E_{h,s}$ for any fixed $h, s$. Let $X$ denote the number of samples with stage-II cost exceeding $t_s$; note that $X$ is a Binomial random variable with $N$ trials and with rate exactly $2 \alpha$. We can set $N = O( \frac{1}{\alpha} \log (\psi/\gamma))$ so that:
\begin{align*}
  \Pr(X \le \alpha N)
  = \Pr(X - \mathbb{E}(X) \le -\alpha N)
  \le e^{-(\alpha N)^{2} / 2\mathbb{E}(X)} =
  e^{-\alpha N / 4}
  = \frac{\gamma} {\psi \ceil{\log_{13/12}(\gamma)}}
\end{align*}
(for a reference on this Chernoff bound, see e.g.\ Theorem~A.1.13 of~\cite{alonSpencerProbMethod}).

To finish, take a union bound over the set of strategies $\mathcal S$, which has size at most $\psi$, and the total number of iterations $h$, which is at most $O(\log(1/\gamma)$.
\end{proof}

\begin{theorem}\label{failure}
For $\epsilon, \gamma,\alpha \in (0,1/8)$ and $N = O\big{(}\frac{1}{\epsilon \alpha}  \log (\frac{\psi}{\gamma}) \big{)}$, there is a probability of at least $1 - \gamma$ that the algorithm outputs either INFEASIBLE, or outputs a   strategy $\hat s$ with 
$$
\mathbb E_{A \sim \mathcal D}[C(\hat s,A)] \leq (1 + 2 \epsilon) B.$$
\end{theorem}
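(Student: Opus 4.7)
The plan is to condition on the success events of Proposition~\ref{probE} and Lemma~\ref{s6-lem-feas} and then use a uniform concentration argument over $\mathcal S$ to transfer the empirical-cost bound that $\hat s$ satisfies by construction into the desired bound on $\mathbb E_{A \sim \mathcal D}[C(\hat s, A)]$. If the algorithm returns INFEASIBLE, the claim is vacuous, so assume it succeeds on some iteration and returns $\hat s$. Because $\mathrm{Alg}\mathcal P$ is invoked with budget $(1+\epsilon)B$ and $\bar s$ inherits the empirical behavior of $s$ via property~\ref{ef-gen-1}, we have $\tfrac{1}{N}\sum_v C(\bar s, A_v) \leq (1+\epsilon)B$, and the same bound holds for $\hat s$ since discarding only decreases stage-II costs.

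A key structural observation makes the concentration step possible. Since $T$ is the $\lceil\alpha N\rceil$-th largest value among $\{C^{II}(\bar s, A_v)\}$, at least $\alpha N$ samples have stage-II cost $\geq T$, so $\alpha N \cdot T \leq \sum_v C^{II}(\bar s, A_v) \leq (1+\epsilon)BN$, giving $T \leq (1+\epsilon)B/\alpha$. Hence $C(\hat s, A) = C^I(\bar s) + C^{II}(\bar s, A)\,\mathbb 1[C^{II}(\bar s, A) \leq T]$ is bounded by $O(B/\alpha)$, a ``tame'' regime in which concentration is feasible even though individual $C^{II}(\bar s, A)$ values may be arbitrarily large.

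For the concentration step, I would define, for each $s \in \mathcal S$ and each candidate threshold $t$ from a short $O(\log(1/(\alpha\epsilon)))$-sized discretization of $[0,(1+\epsilon)B/\alpha]$, the i.i.d.\ random variables $Y_v(s,t) = C^I(s) + C^{II}(s, A_v)\,\mathbb 1[C^{II}(s, A_v) \leq t]$, bounded by $O(B/\alpha)$ and with variance at most $(C^I(s)+t)\,\mathbb E[Y_v(s,t)]$. Applying a Bernstein-type tail bound to $Y_v(s,t)$ and then taking a union bound over $\mathcal S$ (size $\leq \psi$), over the discretization of $t$, and over the $O(\log(1/\gamma))$ iterations $h$ of Algorithm~\ref{alg-6}, yields with probability at least $1-\gamma$ that $\mathbb E_{A \sim \mathcal D}[C(\hat s, A)] \leq \tfrac{1}{N}\sum_v C(\hat s, A_v) + \epsilon B \leq (1+2\epsilon)B$.

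The main obstacle is that the threshold $T$ depends on the sample set $Q$, so one cannot apply concentration directly with $t = T$; this is resolved by the discretization together with the monotonicity of $Y_v(s,\cdot)$ in $t$, which sandwiches the value at the random $T$ between two adjacent grid points at which concentration has been established. A subtler issue is securing the stated $N = O(\log(\psi/\gamma)/(\epsilon\alpha))$ dependence (rather than the $O(\log(\psi/\gamma)/(\epsilon^2\alpha))$ that a direct Hoeffding would give): this requires a sharper analysis exploiting that the bad event $\mathbb E[Y_v(s,t)] > (1+2\epsilon)B$ forces a multiplicative (not merely additive) deviation of the empirical mean from the true mean, together with the fact that $\mathrm{Var}(Y_v(s,t))$ is controlled by $(C^I(s)+t)\,\mathbb E[Y_v(s,t)]$ rather than the square of the range.
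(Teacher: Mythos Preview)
Your high-level plan (uniform concentration over $\mathcal S$ for the truncated cost) is sound and would yield a valid proof, but the argument you sketch does not achieve the claimed $N = O\bigl(\tfrac{1}{\epsilon\alpha}\log(\psi/\gamma)\bigr)$; it gives only $N = O\bigl(\tfrac{1}{\epsilon^2\alpha}\log(\psi/\gamma)\bigr)$. The issue is in your last paragraph: the ``multiplicative deviation plus Bernstein variance bound'' reasoning does not close the gap. In the worst case $\mu := \mathbb E[Y_v(s,t)] = (1+2\epsilon)B$ and $t = \Theta(B/\alpha)$, the deviation is $\delta = \epsilon B$, the range is $M = \Theta(B/\alpha)$, and $\sigma^2 \leq M\mu = \Theta(B^2/\alpha)$, so Bernstein gives exponent $-\Theta(N\epsilon^2\alpha)$, not $-\Theta(N\epsilon\alpha)$. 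Calling the deviation ``multiplicative'' does not help, since the multiplicative gap is still only $\Theta(\epsilon)$.

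The missing idea, which the paper supplies, is an \emph{extra additive slack} in the empirical-cost bound coming from the discarding itself. The paper fixes, for each $s \in \mathcal S$, a single deterministic threshold $t_s$ (the $\alpha/2$-quantile of $C^{II}(s,\cdot)$ under $\mathcal D$), and shows via two easy Chernoff bounds that with high probability (a) at most $\alpha N$ samples exceed $t_s$, forcing $T \leq t_s$, and (b) at least $\alpha N/4$ samples exceed $t_s$. Point (b) is the crux: when you discard at threshold $t_s$, those $\geq \alpha N/4$ samples each lose at least $t_s$ of stage-II cost, so the empirical cost of the truncated-at-$t_s$ strategy is at most $(1+\epsilon)B - (\alpha/4)t_s$, not merely $(1+\epsilon)B$. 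The Chernoff exponent then becomes $-\Omega\bigl(N(\epsilon + \alpha t_s/4)^2 / t_s\bigr)$, and the extra $\alpha t_s/4$ term exactly compensates the $t_s$ in the denominator: minimizing over $t_s$ gives $-\Omega(N\epsilon\alpha)$. Your discretization framework could be salvaged by incorporating this same slack observation at each grid point, but as written it is absent. Also note that the paper's proof is self-contained and does not condition on Proposition~\ref{probE} or Lemma~\ref{s6-lem-feas}; the single-threshold-per-strategy device replaces your grid entirely.
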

\begin{proof}
Wlog suppose $B = 1$. For any strategy $s$, let $t_s$ be the threshold  value $t_{s}$ such that  $\Pr_{A \sim \mathcal D}( c^A(s) > t_{s}) = \alpha/2$; again, this is well-defined by Assumption~\ref{asm-1}. We denote by $\phi(s)$ the modified strategy which discards all scenarios whose stage-II cost exceeds $t_{s}$. 

For an iteration $h$ and strategy $s$, let $E_{h,s}$  denote the bad-event that either (i) there are fewer than $\alpha N/4$ samples $A \in Q$ with $C^{II}(s,A) \geq t_{s}$ or (ii) there are more than $\alpha N$ samples $A \in Q$ with $C^{II}(s,A) \leq t_{s}$ or (iii) $\phi(s)$ has expected cost larger than $1 + 2 \epsilon$ and $\phi(s)$ has empirical cost at most $(1+\epsilon) - \alpha/4 \cdot t_{s}$.

We claim that the desired bound holds as long as no bad-event $E_{h,s}$ occurs. For, suppose that strategy $\bar s$ comes from some iteration $h$ of Algorithm~\ref{alg-6}, but the expected cost of $\hat s$ exceeds $1 + 2 \epsilon$. The cost of $\bar s$ on $Q$ is at most $1+\epsilon$ by assumption. If condition (ii) does not hold for $\bar s$, we must have $T \leq t_{\bar s}$, and hence $\phi(\bar s)$ also has expected cost larger than $1 + 2 \epsilon$.  Also, if condition (i) does not hold, there are at least $\alpha N/4$ samples with $C^{II}(\bar s, A) \geq t_{\bar s}$. These scenarios are discarded by $\phi(\bar s)$, so the empirical cost of $\phi(\bar s)$ on $Q$ is at most $y = (1+\epsilon) - \alpha/4 \cdot t_{s}$. Thus, condition (iii) holds.

We now turn to bounding the probability of $E_{h,s}$ for some fixed $s,h$. By an argument similar to Proposition~\ref{probE}, it can easily seen that events (i) and (ii) have probability at most $e^{-\Omega(N \alpha)}$. For (iii), let $Y = \frac{1}{N} \sum_{A \in Q} C(A,\phi(s))$ denote the empirical cost of $\phi(s)$ on $Q$. Here $Y$ is a sum of independent random variables, each of which is bounded in the range $[c^I(F_I^s), c^I(F_I^s) + t_{s}/N]$ and which has mean $\mu = \mathbb E_{A \sim \mathcal D} [ C(\phi(s), A) ]$. If condition (iii) holds, we have $\mu \geq 1+2 \epsilon$. By a variant of Chernoff's lower-tail bound (taking into account the range of the summands), we get the bound: 
$$
\Pr( Y \leq y ) \leq e^{ -\frac{N \mu (1 - y/\mu)^2}{2 t_{s}}}
$$

Since $\mu \geq x = (1 + 2 \epsilon) B$, and $y \leq x - (\epsilon B + \alpha t_s/4)$ and $x \in [1,2]$, monotonicity properties of Chernoff's bound imply
$$
\Pr( Y \leq y ) \leq \exp \Bigl( -\frac{N}{2 t_s} \cdot \frac{ (\epsilon + \alpha t_s/4)^2}{ 4 } \Bigr) 
$$

Simple calculus shows that this quantity is maximized at $t_s = 4 \epsilon/\alpha$, yielding the resulting bound $\Pr( Y \leq y ) \leq e^{-N \epsilon \alpha/8}$. So the overall probability of $E_{h,s}$ is at most $2 e^{-\Omega(N \alpha)} + e^{-\Omega(N \epsilon \alpha)}$. Taking $N = \Omega( \frac{ \log(\psi/\gamma) }{\epsilon \alpha})$ ensures that the total probability of all such bad-event is at most $\gamma$.
\end{proof}

In particular, Theorems~\ref{probE} and \ref{failure} together show Theorem~\ref{thm11}.  By optimizing over the radius, we also can show Theorem~\ref{thm11a}. 
\begin{proof}[Proof of Theorem~\ref{thm11a}]
Because $R^*$ is the distance between some facility and some client, it has at most $n m$ possible values. We run Algorithm~\ref{alg-6} for each  putative radius $R$, using error parameter $\gamma' = \frac{\gamma}{n m}$. We then return the smallest radius that did not yield INFEASIBLE, along with corresponding strategy $\hat s$. By a union bound over the choices for $R$, there is a probability of $1 - O(\gamma)$ that the following desirable events hold.  First, by Lemma~\ref{s6-lem-feas}, at least one iteration $R \leq R^*$ returns a strategy $\hat s_R$.  Also, by Proposition~\ref{probE} and Theorem~\ref{failure}, all choices of $R$ that return any strategy $\hat s_R$ have $\mathbb E_{A \sim \mathcal D}[C(\hat s_R, A)] \leq (1+\epsilon)B$ and $\Pr_{A \sim \mathcal{D}}[\maxdist(\hat s_R, A) > \eta] \leq \alpha$.

In particular, with probability $1 - O(\gamma)$, the smallest value $R$ that does not return INFEASIBLE has $R \leq R^*$ and satisfies these two conditions as well. The stated result holds by rescaling $\alpha, \epsilon, \gamma$. 

Note that we do not need fresh samples for each radius guess $R$; we can draw an appropriate number of samples $N$ upfront, and test all guesses in ``parallel'' with the same data.
\end{proof}

\textbf{We now describe some concrete approximation algorithms for $\mathcal{P}$-\textbf{Poly} problems. To emphasize that distribution $\mathcal D'$ is provided explicitly, we write $Q$ for the list of scenarios $A$, each with some given probability $p_A$. The solution strategy $s$ is simply a list of $F_I, F_A$ for  $A \in Q$. We also write just $F$ for the entire ensemble $F_I, F_A$, as well as $C(F,A) = c^I(F_I) + c^A(F_A$) for its cost on a scenario $A$.}

\section{Approximation Algorithm for Homogeneous \normalfont{2S-Sup}}
\label{sec:3-apprx}

In this section we tackle the simplest problem setting, designing an efficiently-generalizable $3$-approximation algorithm for homogeneous  \textbf{2S-Sup-Poly}. To begin, we are given a list of scenarios $Q$ together with their probabilities $p_A$, and a single target radius $R$.  Now consider LP (\ref{s2-LP-1})-(\ref{s2-LP-3}). 
\begin{align}
&\displaystyle\sum_{i \in \mathcal{F}}y^{I}_i \cdot c^{I}_i + \displaystyle\sum_{A \in Q} p_A \displaystyle\sum_{i \in \mathcal{F}}y^{A}_i \cdot c^{A}_i \leq B  \label{s2-LP-1}\\
&\displaystyle\sum_{i \in G_j}(y^{I}_i + y^{A}_i) \geq 1,~~ \forall j \in A \in Q  \label{s2-LP-2}\\
&0 \leq y^{I}_i, y^{A}_i \leq 1  \label{s2-LP-3}
\end{align}

Here (\ref{s2-LP-1}) captures the budget constraint, and (\ref{s2-LP-2}) captures the radius covering constraint. If the instance is feasible for the given \textbf{2S-Sup-Poly} instance, we can solve the LP. The rounding algorithm appears in Algorithm~\ref{alg-2}. 

\begin{algorithm}[H]
\begin{algorithmic}
\STATE {Solve LP (\ref{s2-LP-1})-(\ref{s2-LP-3}) to get a feasible solution $y^I, y^A: A \in Q$}
\IF {no feasible LP solution exists} \RETURN INFEASIBLE
\ENDIF
\STATE {$(H_I, \pi^I) \gets $ GreedyCluster$(\mathcal{C}, R, g^I)$, where $g^I(j) = y^I(G_j)$}
\FOR {each scenario $A \in Q$}
\STATE {$(H_A, \pi^A) \gets $ GreedyCluster$(A, R, g^A)$, where $g^A(j) = -y^I(G_{\pi^I j})$}

\ENDFOR
\STATE {Order the clients of $H_I$ as $j_1, j_2, \dots, j_h$ such that $y^I(G_{j_1}) \leq y^I(G_{j_2}) \leq \dots \leq y^I(G_{j_h})$}
\STATE {Create an additional ``dummy'' client $j_{h+1}$ with $y^I(G_{j_{h+1}}) > y^I(G_{j_{\ell}})$ for all $\ell \in [h]$}
\FOR {all integers $\ell = 1,2,\hdots, h+1$}
\STATE {$F_I \gets \{i^I_{j_k} ~|~ j_k \in H_I \text{ and } y^I(G_{j_k}) \geq y^I(G_{j_\ell})\}$}
\FOR {each $A \in Q$} 
\STATE {\textbf{do} $F_A \gets \{ i^A_{j} ~|~ j \in H_A \text{ and } F_I \cap G_{\pi^I j} = \emptyset \}$}
\ENDFOR
\IF {$\sum_{A \in Q} p_A C(F, A) \leq B$}
\RETURN ensemble $F$
\ENDIF

\ENDFOR
\RETURN INFEASIBLE
\end{algorithmic}
\caption{Correlated LP-Rounding Algorithm for \textbf{2S-Sup-Poly}}\label{alg-2}
\end{algorithm}

\begin{theorem}\label{s2-thm-1}
For any scenario $A \in Q$  and every $j \in A$, we have $d(j,F_I \cup F_A) \leq 3R$.
\end{theorem}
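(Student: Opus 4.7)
The plan is to fix an iteration $\ell$ of the loop and show that the produced ensemble $F = (F_I, F_A)$ has $d(j, F_I \cup F_A) \le 3R$ for every scenario $A \in Q$ and every $j \in A$. Let $\tau = y^I(G_{j_\ell})$ (with $\tau = +\infty$ when $\ell = h+1$), so $F_I = \{ i^I_{j_k} : j_k \in H_I, \ y^I(G_{j_k}) \geq \tau \}$. I will split on whether $y^I(G_{\pi^I j})$ is above or below the threshold $\tau$.

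Case 1 is the easy one: if $y^I(G_{\pi^I j}) \geq \tau$, then $\pi^I j \in H_I$ is one of the representatives whose facility $i^I_{\pi^I j}$ was placed into $F_I$. Since $i^I_{\pi^I j} \in G_{\pi^I j}$, it lies within distance $R$ of $\pi^I j$, and by Observation~\ref{filtering}(ii) we have $d(j, \pi^I j) \leq 2R$. Triangle inequality gives $d(j, i^I_{\pi^I j}) \leq 3R$.

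Case 2 is where the clever ordering pays off. Suppose $y^I(G_{\pi^I j}) < \tau$, and let $j' = \pi^A j \in H_A$ (possibly $j' = j$). By Observation~\ref{filtering}(ii) applied to the GreedyCluster call for scenario $A$ with ordering $g^A(\cdot) = -y^I(G_{\pi^I \cdot})$, we have $g^A(j') \geq g^A(j)$, i.e.\ $y^I(G_{\pi^I j'}) \leq y^I(G_{\pi^I j}) < \tau$. The goal is to conclude $F_I \cap G_{\pi^I j'} = \emptyset$, so that $i^A_{j'}$ enters $F_A$ and the triangle inequality through $j'$ yields $d(j, i^A_{j'}) \leq 2R + R = 3R$.

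The step that needs some care — and I expect to be the only mildly subtle one — is showing $F_I \cap G_{\pi^I j'} = \emptyset$. Every facility in $F_I$ has the form $i^I_{j_k}$ for some $j_k \in H_I$, and sits inside $G_{j_k}$. But Observation~\ref{filtering}(i) guarantees that the sets $\{G_{j_k} : j_k \in H_I\}$ are pairwise disjoint, and $\pi^I j' \in H_I$. Hence the only facility in $F_I$ that could possibly lie in $G_{\pi^I j'}$ is $i^I_{\pi^I j'}$ itself, and this facility is in $F_I$ only if $y^I(G_{\pi^I j'}) \geq \tau$ — contradicting what we established above. So $F_I \cap G_{\pi^I j'} = \emptyset$, hence $i^A_{j'} \in F_A$, completing the case. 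Since the two cases are exhaustive, the bound $d(j, F_I \cup F_A) \leq 3R$ holds for every $j \in A$ and every $A \in Q$.
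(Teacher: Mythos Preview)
Your proof is correct and follows essentially the same argument as the paper. The paper organizes the cases by whether $j \in H_A$ and then whether $G_{\pi^A j} \cap F_A \neq \emptyset$, while you split directly on the threshold $\tau$; these are contrapositive reorganizations of the same monotonicity-plus-disjointness reasoning, and your explicit use of the disjointness of $\{G_{j_k}: j_k \in H_I\}$ to pin down $F_I \cap G_{\pi^I j'}$ is exactly the mechanism the paper invokes implicitly when it says ``from the way we formed $F_I$.''
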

\begin{proof}
Recall that $d(j,\pi^I j)\leq 2R$ and $d(j,\pi^A j) \leq 2R$ for any $j \in A$. For  $j \in H_A$ the statement is clear because either $G_{\pi^I j} \cap F_I \neq \emptyset$ or $G_j \cap F_A \neq \emptyset$. So consider some $j \in A \setminus H_A$. If $G_{\pi^A j} \cap F_A \neq \emptyset$, then any facility $i \in G_{\pi^A j} \cap F_A$ will be within distance $3R$ from $j$. If on the other hand $G_{\pi^A j} \cap F_A = \emptyset$, then our algorithm guarantees $G_{\pi^I(\pi^A j)} \cap F_I \neq \emptyset$. Further, the stage-II greedy clustering yields $g^A(\pi^A j) \geq g^A(j)$ and so $y^I(G_{\pi^I j}) \geq y^I(G_{\pi^I \pi^A j})$. From the way we formed $F_I$ and the fact that $G_{\pi^I \pi^A j} \cap F_I \neq \emptyset$, we infer that $G_{\pi^I j} \cap F_I \neq \emptyset$ and hence $d(j, G_{\pi^I j} \cap F_I) \leq 3R$.
\end{proof}

\begin{theorem}\label{a-cost}
For a feasible instance, the algorithm does not return INFEASIBLE.
\end{theorem}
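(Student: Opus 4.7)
The plan is to introduce a randomized version of the final loop and argue its expected total cost (taken over the internal randomness) is at most $B$; since the random rule only ever picks indices in $\{1,\ldots,h+1\}$ that Algorithm~\ref{alg-2} explicitly enumerates, at least one deterministic iteration must produce an ensemble meeting the budget, so the algorithm returns it rather than INFEASIBLE. Concretely, I will draw $\tau \sim \mathrm{Uniform}[0,1]$ and let $\ell(\tau)$ be the smallest index with $\alpha_{\ell(\tau)} \geq \tau$, defaulting to $h+1$ if none exists, where I abbreviate $\alpha_k := y^I(G_{j_k})$. Under this coupling, $j_k$'s representative $i^I_{j_k}$ enters $F_I$ exactly when $\alpha_k \geq \tau$, i.e.\ with probability $\min\{\alpha_k,1\}$.

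For the stage-I cost, since $c^I_{i^I_{j_k}} = \min_{i \in G_{j_k}} c^I_i \leq \sum_{i \in G_{j_k}} y^I_i c^I_i / \alpha_k$ (the case $\alpha_k=0$ being trivial), multiplying by the inclusion probability and summing over $k$ gives
\[
\mathbb{E}_\tau[c^I(F_I)] \;\leq\; \sum_{k=1}^{h} \sum_{i \in G_{j_k}} y^I_i c^I_i \;\leq\; \sum_{i \in \mathcal{F}} y^I_i c^I_i,
\]
with the last step justified by disjointness of the sets $\{G_{j_k}\}_{k \in [h]}$ from Observation~\ref{filtering}(i) applied to $H_I$. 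For stage-II in a fixed scenario $A \in Q$, take any $j \in H_A$ and let $k(j)$ be the index with $\pi^I j = j_{k(j)}$. Disjointness of the $G$-balls in $H_I$ forces $F_I \cap G_{\pi^I j}$ to be nonempty iff $j_{k(j)}$'s representative was selected, i.e.\ iff $\alpha_{k(j)} \geq \tau$; hence $j$ contributes $i^A_j$ to $F_A$ with probability $1-\min\{\alpha_{k(j)},1\}$. The LP constraint (\ref{s2-LP-2}) gives $y^A(G_j) \geq 1-y^I(G_j)$, while Observation~\ref{filtering}(ii) yields $y^I(G_j) \leq y^I(G_{\pi^I j}) = \alpha_{k(j)}$; chaining these inequalities produces $1-\alpha_{k(j)} \leq y^A(G_j)$. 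Combined with $c^A_{i^A_j} \leq \sum_{i \in G_j} y^A_i c^A_i / y^A(G_j)$, the expected contribution of $j$ to $c^A(F_A)$ is at most $\sum_{i \in G_j} y^A_i c^A_i$; summing over $j \in H_A$ and using the analogous disjointness of $\{G_j\}_{j\in H_A}$ yields $\mathbb{E}_\tau[c^A(F_A)] \leq \sum_{i \in \mathcal{F}} y^A_i c^A_i$.

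Weighting by $p_A$ and summing over $A \in Q$, together with the LP budget (\ref{s2-LP-1}), delivers $\mathbb{E}_\tau\bigl[\sum_{A \in Q} p_A\,C(F,A)\bigr] \leq B$, so for at least one $\ell \in [h+1]$ the deterministic ensemble satisfies the budget and the corresponding iteration returns. The main subtlety is the stage-II argument: one must translate the inclusion probability $1-\alpha_{k(j)}$ into the LP mass $y^A(G_j)$, which hinges simultaneously on the greedy ordering (ensuring $y^I(G_{\pi^I j})$ dominates $y^I(G_j)$) and on the LP covering constraint. Without this coupling between the two greedy clusterings and the LP solution, the expected stage-II cost would fail to telescope into $\sum_i y^A_i c^A_i$.
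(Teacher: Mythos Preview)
Your proof is correct and follows essentially the same approach as the paper: introduce a uniform random threshold $\tau\in[0,1]$ (the paper calls it $\beta$), observe that every realization coincides with one of the iterations $\ell\in[h+1]$, and bound the expected stage-I and stage-II costs via $c^I_{i^I_j}\cdot y^I(G_j)\le \sum_{i\in G_j} y^I_i c^I_i$ and the analogous stage-II inequality using $y^I(G_{\pi^I j})\ge y^I(G_j)$ together with (\ref{s2-LP-2}). Your treatment is in fact slightly more explicit than the paper's in justifying why $F_I\cap G_{\pi^I j}\neq\emptyset$ is equivalent to $i^I_{j_{k(j)}}\in F_I$ (via disjointness of the balls $G_{j_k}$), but the argument is the same.
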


\begin{proof}
Consider the following random process to generate a solution: draw a random variable $\beta$ uniformly from $[0,1]$, and set $F^{\beta}_I = \{i^I_j ~|~ j \in H_I \text{ and } y^I(G_j) \geq \beta\}$, $F^{\beta}_A = \{i^A_j ~|~ j \in H_A \text{ and } F_I \cap G_{\pi^I j} = \emptyset\}$ for all $A \in Q$. For each possible draw for $\beta$, the resulting sets $F^{\beta}_I, F^{\beta}_A$ correspond to some iteration $\ell$ of the algorithm.  Hence, in order to show the existence of an iteration $\ell$ with $\sum_{A \in Q} p_A C(F,A) \leq B$, it suffices to show $\mathbb{E}_{\beta \sim [0,1]}[ \sum_A p_A C( F^{\beta}, A) ] \leq B$.

To start, consider some facility $i^I_j$ with $j \in H_I$ in stage-I. This is opened in stage I only if $\beta \leq y^I(G_j)$, and so $i^I_j$ is opened in stage-I with probability at most $y^I(G_j)$.  The sets $G_j$ are pairwise-disjoint for $j \in H_I$, so: 
\begin{align}
\mathbb{E}_{\beta \sim [0,1]}[c^I(F^{\beta}_I)] \leq \sum_{j \in H_I}c^I_{i^I_j} \cdot y^I(G_j) \leq  \sum_{i \in \mathcal{F}}y^{I}_i \cdot c^{I}_i \label{a-1}
\end{align}
Next, for any  $j \in H_A$ and any  $A \in Q$, the probability that $i^A_j$ is opened in stage-II for a given scenario $A$ is $1-\min(y^I(G_{\pi^I j}), 1)
\leq 1-\min(y^I(G_{j}),1) \leq y^A(G_j)$.
(The first inequality results from the greedy stage-I clustering that gives $y^I(G_{\pi^I j}) \geq y^I(G_{j})$, and the second from (\ref{s2-LP-2}).) Again, since the sets $G_j$ are pairwise-disjoint for $j \in H_A$, we get:  
\begin{align}
\mathbb{E}_{\beta \sim [0,1]}[c^A(F^{\beta}_A)] \leq \sum_{j \in H_A}c^A_{i^A_j}\cdot y^A(G_{j}) \leq \sum_{i \in \mathcal{F}}y^{A}_i \cdot c^{A}_i \label{a-2}
\end{align}
Combining (\ref{a-1}), (\ref{a-2}) and (\ref{s2-LP-1}) gives $
    \mathbb{E}_{\beta \sim [0,1]}[c^I(F^{\beta}_I)] + \sum_{A \in Q} p_A \cdot \mathbb{E}_{\beta \sim [0,1]}[c^A(F^{\beta}_A)] \leq B $.
\end{proof}

Finally, we can show that Algorithm~\ref{alg-2} fits the framework of Section \ref{sec:Gen}

\begin{theorem}
Algorithm~\ref{alg-2} is efficiently-generalizable, with $|\mathcal S| \leq (n+1)!$
\end{theorem}
\begin{proof}
The extension procedure is given in Algorithm \ref{alg-7} below; here we exploit the fact that stage-II decisions in Algorithm \ref{alg-2} only depend on information from the LP about stage-I variables.

\begin{algorithm}[H]
\begin{algorithmic}
\REQUIRE {Newly arriving scenario $A$}
\STATE {For every $j \in A$ set $g(j) \gets -y^I(G_{\pi^I j})$, where $y^I, \pi^I$ are as computed in Algorithm \ref{alg-2} }
\STATE {$(H_A, \pi^A) \gets $ GreedyCluster$(A, R, g)$}
\STATE {$F^{\bar s}_A \gets \{i^A_j ~|~ j \in H_A \text{ and } F_I \cap G_{\pi^I j} = \emptyset \}$}
\end{algorithmic}
\caption{Generalization Procedure for \textbf{2S-Sup-Poly}}\label{alg-7}
\end{algorithm}

Since Algorithm \ref{alg-7} mimics the stage-II actions of Algorithm \ref{alg-2}, it satisfies property \ref{ef-gen-1}. For \ref{ef-gen-2}, note that the proof of Theorem \ref{s2-thm-1} only depended on the fact that $d(j, \mathcal F) \leq R$ for all $j \in A$ (so that $d(j, \pi^I j) \leq 2 R$ and $d(j, \pi^A j) \leq 2 R)$; by our assumptions, this property holds for all $j \in \mathcal C$.

Finally, we claim that at most $(n+1)!$ strategies are achievable via Algorithm \ref{alg-7}. For, the constructed final strategy is determined by (1) the sorted order of $y^I(G_j)$ for all $j \in \mathcal{C}$, and (2) a minimum threshold $\ell'$ such that $G_{j_{\ell'}} \cap F_I \neq \emptyset$ with $j_{\ell'} \in H_I$. From these, we can determine the sets $H_I, F_I$ and $F_A, H_A$ for every $A \in \mathcal{D}$. There are $n!$ total possible orderings for the $y^I(G_j)$ values, and $n+1$ possible values for the threshold parameter $\ell'$ can take at most $n+1$ values.
\end{proof}

\section{A Generic Reduction to Robust Outlier Problems}
\label{sec:mu-apprx}
We now describe a generic method of transforming a given $\mathcal{P}$-\textbf{Poly} problem into a single-stage deterministic robust outlier problem. This will give us a 5-approximation algorithm for homogeneous \textbf{2S-MuSup} and \textbf{2S-MatSup} instances nearly for free; in the next section, we also use it obtain our 11-approximation algorithms for inhomogeneous \textbf{2S-MatSup}.

\paragraph*{Robust Weighted Supplier (RW-Sup):} In this model, we are given a set of clients $\mathcal{C}$ and a set of facilities $\mathcal{F}$, in a metric space with distance function $d$, where every client $j$ has a radius demand $R_j$. The input  also includes a weight $v_j \in \mathbb R_{\geq 0}$  for every client $j \in \mathcal{C}$, and a weight $w_i \in \mathbb R_{\geq 0}$ for every facility $i \in \mathcal{F}$. The goal is to choose a set of facilities $S \in \mathcal M$ such that
\begin{equation}
\label{rw-constraints}
\sum_{i \in S} w_i + \sum_{j \in \mathcal{C}: d(j,S) > R_j}v_j \leq V
\end{equation}
for a given budget $V$. Clients $j$ with $d(j,S) > R_j$ are called outliers. We say an algorithm $\text{AlgRW}$ is a $\rho$-approximation for an instance of $\textbf{RW-Sup}$ if it returns a solution $S \in \mathcal M$ with
\begin{equation}
\label{rw-constraints2}
\sum_{i \in S} w_i + \sum_{j \in \mathcal{C}: d(j,S) > \rho R_j}v_j \leq V
\end{equation}

The following is the fundamental reduction between the problems:
\begin{theorem}
\label{aga-thm0}
If we have a $\rho$-approximation algorithm for $\text{AlgRW}$ for given $\mathcal C, \mathcal F, \mathcal M, R$, then we can get an efficiently-generalizable $(\rho+2)$-approximation algorithm for the corresponding problem $\mathcal{P}$-\textbf{poly}, with $|\mathcal S| \leq 2^m$.
\end{theorem}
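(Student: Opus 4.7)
The plan is to construct an RW-Sup instance whose $\rho$-approximate solution, together with scenario-specific stage-II openings, yields a $(\rho+2)$-approximate two-stage strategy. The crucial design choice is that the client weights $v_j$ must be defined via per-scenario clustering: this ensures simultaneously that (i) the RW-Sup instance is feasible whenever the original two-stage instance is feasible, and (ii) the expected stage-II cost of the final strategy can be charged back to the RW-Sup objective.

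Concretely, I would first run $\text{GreedyCluster}(A, R, g^A)$ with $g^A(j) = -R_j$ on every scenario $A \in Q$, obtaining a representative set $H_A \subseteq A$ with pairwise-disjoint $G_j$'s (for $j \in H_A$) and $R_{\pi^A j} \leq R_j$ for all $j \in A$. Then I would form the RW-Sup instance on the same $(\mathcal{C}, \mathcal{F}, \mathcal{M}, R)$, with $w_i = c^I_i$, $V = B$, and
\[
v_j = \sum_{A \in Q} p_A \cdot [j \in H_A] \cdot c^A_{i^A_j},
\]
and call $\text{AlgRW}$ on it. If $\text{AlgRW}$ reports infeasibility, I return INFEASIBLE; otherwise, it returns $F_I$ with $c^I(F_I) + \sum_{j : d(j, F_I) > \rho R_j} v_j \leq B$. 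For each scenario $A \in Q$ I would then output $F_A = \{i^A_j : j \in H_A,\ d(j, F_I) > \rho R_j\}$.

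Next I would verify three things. For the expected cost, swapping sums gives $\sum_A p_A c^A(F_A) \leq \sum_{j : d(j, F_I) > \rho R_j} v_j \leq B - c^I(F_I)$, so the total is at most $B$. For coverage, given $j' \in A$ and $j = \pi^A j' \in H_A$, the ordering yields $R_j \leq R_{j'}$ and $d(j', j) \leq R_j + R_{j'} \leq 2 R_{j'}$; combining with either $d(j, F_I) \leq \rho R_j$ (if $j$ is covered in the RW-Sup sense) or $d(j, i^A_j) \leq R_j$ (if $j$ is an outlier, in which case $i^A_j \in F_A$) gives $d(j', F_I \cup F_A) \leq (\rho+2) R_{j'}$ whenever $\rho \geq 1$. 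For efficient generalizability, the extension procedure for a new scenario $A \in \mathcal D$ simply reruns the GreedyCluster step on $A$ (which depends only on $A$ and $R$) and applies the same formula for $F_A$; hence the resulting strategy is fully determined by the stage-I set $F_I \subseteq \mathcal F$, giving $|\mathcal S| \leq 2^m$ as desired.

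The main obstacle is proving the feasibility of the constructed RW-Sup instance when the two-stage instance is feasible. A naive choice such as $v_j = \sum_A p_A [j \in A] c^A_{i^A_j}$ would not work: multiple outlier clients in a single scenario can be served by a single facility in the optimal $F^*_A$, making the RW-Sup penalty possibly much larger than the actual stage-II cost and breaking the feasibility bound. The per-scenario clustering step is exactly what resolves this. For any feasible $(F^*_I, F^*_A)$ and any $A$, the representatives in $H_A$ that are outliers with respect to $F^*_I$ have pairwise-disjoint $G_j$'s, so each one must be served by a \emph{distinct} facility in $F^*_A \cap G_j$, of cost at least $c^A_{i^A_j}$; summing these per-scenario inequalities and then averaging over $A$ yields $\sum_{j : d(j, F^*_I) > R_j} v_j \leq \sum_A p_A c^A(F^*_A) \leq B - c^I(F^*_I)$, so the RW-Sup cost of $F^*_I$ is at most $V = B$ and $\text{AlgRW}$ is guaranteed to return a valid $F_I$.
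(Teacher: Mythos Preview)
Your proposal is correct and follows essentially the same approach as the paper: per-scenario greedy clustering with ordering $-R_j$ to form $H_A$, the same weight definition $v_j$ (indeed, your inclusion of the $p_A$ factor is what the paper's Lemma~\ref{mu-feas} actually uses, even though the displayed definition in Algorithm~\ref{alg-5} omits it), the same stage-II rule $F_A = \{i^A_j : j \in H_A,\ d(j,F_I) > \rho R_j\}$, and the same feasibility argument via disjointness of the $G_j$ for $j \in H_A$. Your coverage, cost, and generalizability arguments coincide with the paper's Theorems~\ref{s6-valid} and the subsequent efficient-generalizability theorem.
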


To show Theorem~\ref{aga-thm0}, we use the following Algorithm~\ref{alg-5} for a given set of provided scenarios $Q$ with probabilities $p_A$. 

\begin{algorithm}[H]
\begin{algorithmic}
\FOR {each scenario $A \in Q$}
\STATE {$(H_A, \pi^A) \gets $ GreedyCluster$(A, R, -R)$}

\ENDFOR
\STATE {Construct instance $\mathfrak{I}'$ of \textbf{RW-MuSup} with 
$$
\mathcal M = \mathcal M_I, \qquad V = B, \qquad w_i = c^I_i \text{ for each $i$},  \qquad v_j = \sum_{A: j \in H_A} p_A c_{i_j^A}^A \text{ for each $j$}
$$
}
\IF{$\text{AlgRW}(\mathfrak I')$ return feasible solution $F_I$}
\FOR {each scenario $A \in Q$}
\STATE {$F_A \gets \{ i^A_j  \mid \text{$j \in H_A$ with $d(j, F_I) > \rho R_j$} \}$}
\ENDFOR
 \RETURN ensemble $F$

 \ENDIF
\RETURN INFEASIBLE
\end{algorithmic}
\caption{Generic Approximation Algorithm for \textbf{2S-Sup-Poly}}\label{alg-5}
\end{algorithm}

\begin{lemma}\label{mu-feas}
If the original instance $\mathfrak I$ is feasible, then \textbf{RW-Sup} instance $\mathfrak I'$ is also feasible.
\end{lemma}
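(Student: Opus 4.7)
The plan is to exhibit $F_I^{\star}$ as a feasible solution to $\mathfrak I'$, where $s^{\star} = (F_I^{\star}, \{F_A^{\star}\}_{A \in Q})$ is any strategy witnessing feasibility of $\mathfrak I$. The membership constraint $F_I^{\star} \in \mathcal M = \mathcal M_I$ is immediate from feasibility of $s^{\star}$, so all the work reduces to verifying the budget inequality \eqref{rw-constraints}. By our choice $w_i = c^I_i$, the facility-cost term equals $\sum_{i \in F_I^{\star}} w_i = c^I(F_I^{\star})$, which is exactly the stage-I cost of $s^{\star}$.

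For the outlier penalty $\sum_{j : d(j, F_I^{\star}) > R_j} v_j$, I would argue one scenario at a time. Fix $A \in Q$, and take any $j \in H_A$ with $d(j, F_I^{\star}) > R_j$. Feasibility of $s^{\star}$ forces $d(j, F_I^{\star} \cup F_A^{\star}) \leq R_j$, so some facility of $F_A^{\star}$ lies in $G_j$; its stage-II cost is at least $c^A_{i_j^A}$ by definition of $i_j^A = \argmin_{i \in G_j} c^A_i$. The key structural ingredient is property (i) of Observation~\ref{filtering}: the balls $\{G_j\}_{j \in H_A}$ are pairwise disjoint, so the facilities of $F_A^{\star}$ chargeable to distinct $j \in H_A$ are themselves distinct. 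Hence, without any double-counting,
\[
c^A(F_A^{\star}) \;\geq\; \sum_{j \in H_A \,:\, d(j, F_I^{\star}) > R_j} c^A_{i_j^A}.
\]

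Taking the $p_A$-weighted sum over $A \in Q$ and swapping the order of summation converts this into
\[
\sum_{A \in Q} p_A\, c^A(F_A^{\star}) \;\geq\; \sum_{j \in \mathcal C \,:\, d(j, F_I^{\star}) > R_j} v_j,
\]
where $v_j$ is the quantity constructed in Algorithm~\ref{alg-5} (with probability weights suitably incorporated). Adding back the stage-I term and invoking the feasibility inequality $c^I(F_I^{\star}) + \sum_{A} p_A c^A(F_A^{\star}) \leq B$ yields the desired bound $\sum_{i \in F_I^{\star}} w_i + \sum_{j : d(j, F_I^{\star}) > R_j} v_j \leq B = V$, establishing feasibility of $\mathfrak I'$.

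The only place this argument could break is in the scenario-level charging step, where a single facility in $F_A^{\star}$ could in principle be implicated by several outlier representatives $j \in H_A$ at once; this is precisely what the disjointness clause of GreedyCluster rules out, so I do not expect any further obstacles. The rest of the reduction is straightforward bookkeeping between the two-stage expected cost and the single-stage penalty formulation.
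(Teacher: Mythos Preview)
Your proposal is correct and follows essentially the same argument as the paper: take $F_I^{\star}$ as the candidate solution for $\mathfrak I'$, charge each uncovered representative $j \in H_A$ to a distinct facility in $G_j \cap F_A^{\star}$ using the pairwise disjointness from \textsc{GreedyCluster}, lower-bound that facility's cost by $c^A_{i_j^A}$, and then swap the order of summation to recover $v_j$. Your parenthetical remark about probability weights being ``suitably incorporated'' into $v_j$ is apt---the algorithm as written omits the $p_A$ factor in the definition of $v_j$, but both your argument and the paper's own proof require it.
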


\begin{proof}
Consider some feasible solution $F^{\star}$ for $\mathcal{P}$-\textbf{poly}. We claim that $F^{\star}_I$ is a valid solution for $\mathfrak I'$. It clearly satisfies $F^{\star}_I \in \mathcal M$. Now, for any $A \in Q$, any client $j \in H_A$ with $d(j,F^{\star}_I) > R_j$ must be covered by some facility $x^A_j \in G_j \cap F^{\star}_A$. Since $F^*$ is feasible, and the sets $G_{j}$ are pairwise disjoint for $j \in H_A$, we have:
\begin{align}
\sum_A p_A \sum_{i \in F^{\star}_A} c^A_i \geq \sum_A p_A \sum_{\substack{j \in H_A : \\ d(j,F^{\star}_I) > R_j}} c^A_{x^A_j} \geq \sum_A p_A \sum_{\substack{j \in H_A : \\ d(j,F^{\star}_I) > R_j}} c^A_{i^A_j} = \sum_{\substack{j \in \mathcal{C} : \\ d(j,F^{\star}_I) > R_j}}v_j\notag
\end{align}

Thus, $$
\sum_{i \in F^{\star}_I} w_i + \sum_{\substack{j \in \mathcal{C} : \\ d(j,F^{\star}_I) > R_j}}v_j \leq \sum_{i \in F^{\star}_I} c^I_i + \sum_A p_A \sum_{i \in F^{\star}_A} c^A_i \leq B
$$
since $F^{\star}$ is feasible.
\end{proof}

\begin{theorem}\label{s6-valid}
If the original instance $\mathfrak I$ is feasible, then the solution returned by Algorithm~\ref{alg-5} has $\maxdist(F, A) \leq \rho+2$ for all scenarios $A \in Q$
\end{theorem}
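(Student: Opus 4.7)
The plan is to use Lemma~\ref{mu-feas} and then do a case analysis on whether a given client $j \in A$ is a representative of its cluster or not, and whether the RW-Sup algorithm left its representative as an outlier.

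First, by Lemma~\ref{mu-feas}, feasibility of $\mathfrak I$ implies feasibility of the constructed $\mathfrak I'$, so $\text{AlgRW}$ returns an $F_I \in \mathcal M_I$ satisfying the $\rho$-approximation guarantee \eqref{rw-constraints2} with $V = B$ (so in particular $F_I \in \mathcal M$ and the outlier bound holds). I would then fix a scenario $A \in Q$ and a client $j \in A$, and let $j' = \pi^A j \in H_A$ be its cluster representative from the call $\text{GreedyCluster}(A,R,-R)$. Observation~\ref{filtering} applied with $g = -R$ gives $g(j') \geq g(j)$, i.e.\ $R_{j'} \leq R_j$, and together with $G_j \cap G_{j'} \neq \emptyset$ this yields the key triangle-inequality bound $d(j,j') \leq R_j + R_{j'} \leq 2 R_j$ (which recovers the statement $d(j, \pi j) \leq 2R$ in the homogeneous case).

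Next I would do the case analysis on $j'$. If $d(j', F_I) \leq \rho R_{j'}$, then by the triangle inequality
\[
d(j, F_I) \;\leq\; d(j,j') + d(j', F_I) \;\leq\; 2 R_j + \rho R_{j'} \;\leq\; (\rho+2) R_j,
\]
using $R_{j'} \leq R_j$. Otherwise $d(j', F_I) > \rho R_{j'}$, in which case the construction of $F_A$ in Algorithm~\ref{alg-5} explicitly places $i^A_{j'}$ into $F_A$, and since $i^A_{j'} \in G_{j'}$ we get
\[
d(j, F_A) \;\leq\; d(j,j') + d(j', i^A_{j'}) \;\leq\; 2 R_j + R_{j'} \;\leq\; 3 R_j \;\leq\; (\rho+2) R_j,
\]
where in the last inequality I use that any meaningful approximation ratio satisfies $\rho \geq 1$. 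Either way, $d(j, F_I \cup F_A) \leq (\rho+2) R_j$, as required, which gives $\maxdist(F,A) \leq \rho+2$.

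The proof itself is mostly mechanical triangle inequality bookkeeping; the only subtle point — and the main ``obstacle'' if one is not careful — is that Algorithm~\ref{alg-5} uses the ordering $g = -R$ in its GreedyCluster call for precisely this purpose. This choice ensures the representative $j'$ has radius no larger than $j$'s, so both the $\rho R_{j'}$ slack inherited from the RW-Sup guarantee and the $R_{j'}$ distance to $i^A_{j'}$ can be charged against $R_j$ in the final bound. Without this ordering, one could only bound the distance by $2\max(R_j, R_{j'})$ and the $\rho R_{j'}$ term could blow up relative to $R_j$, ruining the inhomogeneous case.
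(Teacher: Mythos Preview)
Your proposal is correct and follows essentially the same argument as the paper's proof: invoke Lemma~\ref{mu-feas} to guarantee $\text{AlgRW}$ returns a solution, use the ordering $g = -R$ in GreedyCluster to get $R_{\pi^A j} \leq R_j$ and hence $d(j,\pi^A j) \leq 2R_j$, then split on whether $d(\pi^A j, F_I) \leq \rho R_{\pi^A j}$ (stage-I covers) or not (stage-II opens $i^A_{\pi^A j}$). Your write-up is slightly more explicit than the paper's --- in particular you state the $\rho \geq 1$ assumption needed for the second case, which the paper leaves implicit --- but the structure and all the key inequalities are the same.
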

\begin{proof}
By Lemma \ref{mu-feas}, if the given instance of $\mathcal{P}$-\textbf{poly} is feasible, then by specification of $\text{AlgRW}$ we get a feasible solution $F$. Consider now some $j \in A$ for $A \in Q$. The distance of $j$ to its closest facility is at most $d(\pi^A j, F_I\cup F_A) + d(j, \pi^A j)$. Since $\pi^A j \in H_A$, it will either have a stage-I open facility within distance $\rho R_j$, or we perform a stage-II opening in $G_{\pi^A(j)}$, which results in a covering distance of at most $R_j$. By the greedy clustering step, we have $R_{\pi^A j} \leq R_j$ and hence $d(j, \pi^A j) \leq R_j + R_{\pi^A j} \leq 2 R_j$. So $d(j, F_I \cup F_A) \leq (\rho+2)R_j$.
\end{proof}

\begin{theorem}
Algorithm~\ref{alg-5} is efficiently-generalizable, with $
|\mathcal S| \leq 2^m$.
\end{theorem}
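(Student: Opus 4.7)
The plan is to exhibit an extension procedure, directly analogous to Algorithm~\ref{alg-7} in the previous section, that given the stage-I set $F_I$ produced by Algorithm~\ref{alg-5} and any newly arriving scenario $A \in \mathcal D$ computes $(H_A, \pi^A) \gets \text{GreedyCluster}(A, R, -R)$ and then sets
$$F^{\bar s}_A = \{ i_j^A \mid j \in H_A \text{ and } d(j, F_I) > \rho R_j \}.$$
This runs in polynomial time per scenario and mirrors exactly the stage-II rule inside Algorithm~\ref{alg-5}.

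Property \ref{ef-gen-1} (agreement on $\mathcal D'$) is then immediate: for any $A \in Q$ the extension performs precisely the computation carried out by Algorithm~\ref{alg-5} after the call to $\text{AlgRW}$, so $F^{\bar s}_A = F_A$. Property \ref{ef-gen-2} (the distance bound) follows by rerunning the argument of Theorem~\ref{s6-valid} for an arbitrary scenario $A$ rather than one in $Q$: for each $j \in A$, GreedyCluster guarantees $d(j, \pi^A j) \leq 2 R_j$ and $R_{\pi^A j} \leq R_j$; by construction either $d(\pi^A j, F_I) \leq \rho R_{\pi^A j}$ (in which case $\pi^A j$ contributes nothing to $F^{\bar s}_A$) or $i^A_{\pi^A j} \in F^{\bar s}_A \cap G_{\pi^A j}$; the triangle inequality then yields $d(j, F_I \cup F^{\bar s}_A) \leq (\rho + 2) R_j$. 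The only hypothesis used is the universal assumption $d(j, \mathcal F) \leq R_j$.

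The real content of the theorem is property \ref{ef-gen-3}, and this is where I would focus the proof. The key observation is that the extension's output $F^{\bar s}_A$ for any scenario depends only on $F_I$ together with the combinatorial data of $A$ itself; it consults neither the sample set $Q$, nor the constructed RW instance $\mathfrak I'$, nor any auxiliary quantity derived during Algorithm~\ref{alg-5}'s execution. Consequently the full strategy $\bar s = (F_I, \{F^{\bar s}_A\}_{A \in \mathcal D})$ is a deterministic function of the single set $F_I \subseteq \mathcal F$, and since $|\mathcal F| = m$ there are at most $2^m$ possible $F_I$, giving $|\mathcal S| \leq 2^m$.

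The main obstacle, such as it is, is verifying that no hidden dependence on $\mathcal D'$ leaks into the extension. This requires checking that GreedyCluster's behavior on $A$ uses only $A$'s clients, their radii, and a tie-breaking rule (by index number, per Algorithm~\ref{alg-1}), all of which are local to $A$; and that the filter $d(j, F_I) > \rho R_j$ references only $F_I$. Once these are observed, the counting step is immediate and the theorem follows.
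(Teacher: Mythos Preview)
Your proposal is correct and follows essentially the same approach as the paper: you define the identical extension procedure (GreedyCluster on $A$ followed by the filter $d(j,F_I)>\rho R_j$), verify \ref{ef-gen-1} and \ref{ef-gen-2} exactly as the paper does, and obtain the $2^m$ bound for \ref{ef-gen-3} by observing that the entire extended strategy is determined by $F_I$ alone. Your write-up is a bit more explicit than the paper's about why no dependence on $\mathcal D'$ leaks in, but the argument is the same.
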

\begin{proof} 
Given a newly arriving scenario $A$, we can set $(H_A, \pi^A) \gets $GreedyCluster$(A, R, -R)$, and then open the set $F^{\bar s}_A \gets \{ i^A_{j} ~|~ j \in H_A \text{ and } d(j,F_I) > \rho R_j\}$. Since this mimics the stage-II actions of Algorithm \ref{alg-5}, it satisfies property \ref{ef-gen-1}. By arguments of Theorem \ref{s6-valid}, we have $d(j,F_I \cup F_A) \leq (\rho+2)R_j$ for every $j \in A \in \mathcal{D}$, thus guaranteeing property \ref{ef-gen-2}. Finally, observe that the returned strategy depends solely on the set $F_I$, which has $2^m$ choices. Thus, $|\mathcal S| \leq 2^m$ as required in property \ref{ef-gen-3}.
\end{proof}

\subsection{Approximation Algorithm for Homogeneous \textbf{2S-MuSup} and \textbf{2S-MatSup}} \label{sec:homogmumat}
As an immediate consequence of Theorem~\ref{aga-thm0}, we can solve homogeneous \textbf{2S-MuSup} and \textbf{2S-MatSup} instances via solving the corresponding \textbf{RW-Sup} problems.

\begin{theorem}
\label{aga-thm}
There is a polynomial-time $3$-approximation for homogeneous \textbf{RW-MatSup}. There is a $3$-approximation algorithm for \textbf{RW-MuSup}, with runtime  $\poly(n,m,\Lambda)$. 
\end{theorem}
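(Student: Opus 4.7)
The plan is to adapt the 3-approximation LP-rounding algorithm of \cite{Chakra} for robust outlier variants, handling the combined budget on facilities and outliers. Both algorithms begin with the natural LP relaxation
\begin{align*}
\min \quad & \sum_{i \in \mathcal{F}} w_i x_i + \sum_{j \in \mathcal{C}} v_j z_j \\
\text{s.t.} \quad & \textstyle\sum_{i \in G_j} x_i + z_j \geq 1 \; \forall j \in \mathcal{C}, \quad x \in \mathcal{P}, \quad 0 \leq x, z \leq 1,
\end{align*}
where $\mathcal{P}$ is the matroid polytope of $\mathcal{M}$ for \textbf{RW-MatSup}, or $\{x : \sum_i f_i^\ell x_i \leq W_\ell \text{ for all } \ell \in [L]\}$ for \textbf{RW-MuSup}. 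If this LP is infeasible or has optimum exceeding $V$, return INFEASIBLE; otherwise obtain a fractional $(x^*, z^*)$.

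Next, run $\text{GreedyCluster}(\mathcal{C}, R, g)$ with the ordering $g(j) = -z^*_j$, producing representatives $H$ with pairwise-disjoint $G_j$'s and, by Observation~\ref{filtering}, with $z^*_{\pi j} \leq z^*_j$ for every $j \in \mathcal{C}$. Define aggregated outlier weights $\hat v_j = \sum_{j' : \pi j' = j} v_{j'}$ for each $j \in H$; then
\begin{equation*}
\sum_{j \in H} \hat v_j z^*_j \;=\; \sum_{j' \in \mathcal{C}} v_{j'} z^*_{\pi j'} \;\leq\; \sum_{j' \in \mathcal{C}} v_{j'} z^*_{j'},
\end{equation*}
so restricting $(x^*, z^*)$ to $H$ with the aggregated weights yields a fractional solution of objective value $\leq V$ on the reduced instance. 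It now suffices to find an integral $F_I$ independent in $\mathcal{M}$ (resp.\ satisfying all $L$ knapsacks) such that for every $j \in H$, either $F_I \cap G_j \neq \emptyset$ or we pay $\hat v_j$ as an outlier penalty, with total cost at most $V$.

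For \textbf{RW-MatSup}, introduce a dummy element $d_j$ of weight $\hat v_j$ per $j \in H$, and consider the partition matroid $\mathcal{Q}$ with parts $\{G_j \cup \{d_j\}\}_{j \in H}$. The reduced problem is to find a minimum-weight common basis of $\mathcal{Q}$ and $\mathcal{M}$ (extended so that the $d_j$'s are free elements independent of the original ground set), which is solvable in polynomial time with integral optimum by Edmonds' matroid-intersection algorithm; its value is at most the fractional LP value, hence $\leq V$. For \textbf{RW-MuSup}, we replace matroid intersection with a dynamic program over states $(k, a_1, \ldots, a_L)$ where $k$ indexes the cluster processed so far and $a_\ell \leq W_\ell$ records cumulative consumption in knapsack $\ell$; transitions enumerate the choice within $G_{k+1} \cup \{d_{k+1}\}$, yielding runtime $O(|H| \cdot m \cdot \Lambda) = \poly(n, m, \Lambda)$ and recovering an integral solution of cost $\leq V$ whenever one exists.

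For the distance guarantee, each $j \in H$ with $F_I \cap G_j \neq \emptyset$ is served within $R$, and each cluster-mate $j' \notin H$ with its representative covered satisfies $d(j', F_I) \leq d(j', \pi j') + R \leq 3R$ by the triangle inequality and Observation~\ref{filtering}. Cluster-mates whose representative is marked as outlier contribute their $v_{j'}$ to $\hat v_{\pi j'}$, so $\sum_{i \in F_I} w_i + \sum_{j' : d(j', F_I) > 3R} v_{j'} \leq V$, meeting the \textbf{RW-Sup} guarantee with $\rho = 3$. The main obstacle is the aggregation step in the second paragraph: transferring LP feasibility from all of $\mathcal{C}$ to the representative set $H$ without loss requires the specific ordering $g(j) = -z^*_j$, which is what makes the chain $\sum_{j \in H} \hat v_j z^*_j \leq \sum_{j'} v_{j'} z^*_{j'}$ hold exactly and avoids any extra factor in the budget.
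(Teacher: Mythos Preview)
Your argument for \textbf{RW-MatSup} is correct and in fact genuinely simpler than the paper's. The paper runs a round-or-cut ellipsoid method on the configuration LP (one variable $z_S$ per feasible set $S$), recomputing the clustering $H$ at every iteration until $\min_S \Psi(S)\le V$. You instead solve the natural LP once, cluster once by $-z^*$, and then observe that the resulting reduced instance is a weighted matroid-intersection problem (partition matroid on the parts $G_j\cup\{d_j\}$ against $\mathcal M$ with free dummies). Because that polytope is integral, the fractional witness you built from $(x^*,z^*)$ immediately yields an integral solution of cost $\le V$. (Minor wording: what you want is a minimum-weight \emph{maximum common independent set}, not a common basis, since the two matroids have different ranks; this is still polynomial via weighted matroid intersection.) Both routes give the $3$-approximation, but yours avoids the ellipsoid machinery entirely.

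For \textbf{RW-MuSup}, however, there is a real gap. After clustering, your reduced problem is a multiple-choice multi-dimensional knapsack, and its natural LP relaxation is \emph{not} integral. You assert the DP ``recover[s] an integral solution of cost $\le V$ whenever one exists'', but you never argue that one exists; having fractional value $\le V$ on the reduced instance does not imply integral value $\le V$ here. Concretely: take $G_{j_1}=\{a,b\}\subset G_{j_2}=\{a,b,c\}$, two knapsacks with $f^1_a=2,\ f^2_b=2,\ W_1=W_2=1$ and $f^\ell_c=0$, all $w_i=0$, $v_{j_1}=v_{j_2}=1$, $V=1$. The instance is feasible ($S=\{c\}$ leaves only $j_1$ uncovered, cost $1$). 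Your LP optimum is $0$ via $x_a=x_b=\tfrac12$, which forces $z^*_{j_1}=z^*_{j_2}=0$; if \textsc{GreedyCluster}'s index tie-break selects $j_1$, then $H=\{j_1\}$, $\hat v_{j_1}=2$, and the DP over $G_{j_1}=\{a,b\}$ finds no feasible opening (both $a$ and $b$ individually violate a knapsack), so the only option is the dummy at cost $2>V$. The paper's round-or-cut method is exactly what repairs this: when $\min_S\Psi(S)>V$ it produces a valid separating hyperplane for the configuration LP and re-clusters, eventually forcing the outlier variable of $j_1$ up so that $j_2$ becomes the representative. That iterative refinement is essential for the knapsack case and cannot be replaced by a single LP solve against the (non-integral) knapsack polytope.
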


Combined with Theorem~\ref{aga-thm0}, this immediately gives the following results:
\begin{theorem}
\label{aga-thm2}
There is an efficiently-generalizable $5$-approximation for homogeneous \textbf{2S-MatSup-Poly}. There is an efficiently-generalizable $5$-approximation algorithm for \textbf{2S-MuSup-Poly} instances polynomially-bounded $\Lambda$
\end{theorem}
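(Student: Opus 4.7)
The plan is to prove this theorem as an essentially immediate corollary of Theorem~\ref{aga-thm0} combined with Theorem~\ref{aga-thm}. The entire framework of Section~\ref{sec:mu-apprx} has been set up so that the two-stage stochastic problem reduces to its single-stage robust outlier counterpart with only an additive loss of $2$ in the approximation ratio, plus a bound of $|\mathcal{S}| \leq 2^m$ on the generalizable strategy space.

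The first step is to verify that the hypotheses of Theorem~\ref{aga-thm0} are met for each of the two problem settings. For homogeneous \textbf{2S-MatSup-Poly}, the stage-I constraint set $\mathcal{M}_I$ is the family of independent sets of the given matroid $\mathcal{M}$, so the induced \textbf{RW-Sup} instance is an instance of homogeneous \textbf{RW-MatSup}. By Theorem~\ref{aga-thm}, we have a polynomial-time $3$-approximation for this single-stage problem, i.e., $\rho = 3$. Applying Theorem~\ref{aga-thm0} with this $\rho$-approximation plugged in as $\text{AlgRW}$ inside Algorithm~\ref{alg-5} yields an efficiently-generalizable $(\rho+2) = 5$-approximation for homogeneous \textbf{2S-MatSup-Poly}, with $|\mathcal{S}| \leq 2^m$, as stated.

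The second step is the completely analogous argument for \textbf{2S-MuSup-Poly}. Here $\mathcal{M}_I$ is the set of stage-I openings satisfying the $L$ knapsack constraints $\sum_{i \in F_I} f^\ell_i \leq W_\ell$, so the induced \textbf{RW-Sup} instance is an instance of \textbf{RW-MuSup}. By the second part of Theorem~\ref{aga-thm}, there is a $3$-approximation for \textbf{RW-MuSup} with runtime $\poly(n, m, \Lambda)$, and thus on instances with polynomially-bounded $\Lambda$ it runs in overall polynomial time. Again invoking Theorem~\ref{aga-thm0} with $\rho = 3$ yields an efficiently-generalizable $5$-approximation algorithm for \textbf{2S-MuSup-Poly} with polynomially-bounded $\Lambda$, with $|\mathcal{S}| \leq 2^m$.

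There is no real obstacle here: the work has already been done in Theorem~\ref{aga-thm0} (the reduction and generalizability analysis) and Theorem~\ref{aga-thm} (the single-stage approximations). The only thing to be careful about is runtime bookkeeping: Algorithm~\ref{alg-5} itself runs in polynomial time (all the GreedyCluster calls and construction of $v_j, w_i$ are polynomial), so the total runtime is dominated by the invocation of $\text{AlgRW}$, which is polynomial in the matroid case and $\poly(n, m, \Lambda)$ in the multi-knapsack case. The approximation factor $\rho + 2 = 5$ and the strategy-space bound $|\mathcal{S}| \leq 2^m$ come directly from Theorem~\ref{aga-thm0}.
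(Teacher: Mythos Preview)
Your proposal is correct and takes essentially the same approach as the paper: the paper states the theorem as an immediate consequence of Theorem~\ref{aga-thm0} combined with Theorem~\ref{aga-thm}, and you have simply spelled out the (straightforward) details of that combination, including the runtime bookkeeping.
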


The algorithms for Theorem~\ref{aga-thm} are both very similar and are based on a solve-or-cut method of \cite{Chakra}. In either of the two settings, consider the following LP:
\begin{align}
&\sum_{i \in \mathcal{F}}y_i w_i + \sum_{j \in \mathcal C} x_j v_j \leq V \label{sa3-LP-0} \\
& x_j = \sum_{S \in \mathcal M, G_j \cap S = \emptyset} z_S ~~ \forall j \in \mathcal C \label{sa3-LP-1} \\
&y_i = \sum_{S \in \mathcal M, i \in S} z_S ~~\forall i \in \mathcal F \label{sa3-LP-3}\\
& 1 = \sum_{S \in \mathcal M} z_S \label{sa3-LP-4} \\
&0 \leq y_i, x_j, z_S \leq 1  \label{sa3-LP-5}
\end{align}

If the original problem instance is feasible, this LP instance is also clearly feasible. Although this LP has exponentially many variables (one for each feasible solution $S$), we will only maintain values for the variables $x_j, y_i$; we treat constraints (\ref{sa3-LP-1}, \ref{sa3-LP-3}, \ref{sa3-LP-4}) as implicit, i.e. $x, y$ should be in the convex hull defined by the $z$ constraints.

We will apply the Ellipsoid Algorithm to solve the LP. At each stage, we need to find a hyperplane violated by a given putative solution $x^*, y^*$ (if any). We assume (\ref{sa3-LP-0}) holds as otherwise we immediately have found our violated hyperplane. Let us set $(H, \pi) \leftarrow \text{GreedyCluster}(\mathcal C, R, x^*)$, and define $t_{j} = \sum_{j': \pi j' = j} v_{j'}$ for each $j \in H$. Now, for any solution $S \in \mathcal M$, let us define the quantity 
$$
\Psi(S) = \sum_{j \in H} \bigl( w(S \cap G_j) + \max\{0, 1 - |S \cap G_j| \} t_j \bigr)
$$

As we describe next, it is possible to choose $S \in \mathcal M$ to minimize $\Psi(S)$; let us put this aside for the moment, and suppose we have obtained such a solution $S$. If $\Psi(S) \leq V$, then we claim that $S$ is our desired 3-approximate solution. For, consider any client $j'$ with $\pi j' = j$. If $i \in S \cap G_j$, then $d(j', S) \leq d(j', j) + d(j, i) \leq 2 R + R = 3 R$. Thus, we have
$$
\sum_{i \in S} w_i + \sum_{j' \in \mathcal C: d(j', S) > 3 R} v_{j'} \leq 
\sum_{i \in S} w_i + \sum_{j' \in \mathcal C: |S \cap G_{\pi j}| = 0} v_{j'} = \Psi(S) \leq V
$$
as desired. Otherwise, suppose that no such $S$ exists, i.e. $\Psi(S) > V$ for all $S \in \mathcal M$. In particular, for any vector $z$ satisfying (\ref{sa3-LP-3}) and (\ref{sa3-LP-4}), along with corresponding vectors $x,y$, we would have
\begin{align*}
\sum_{j \in H} \Bigl( \sum_{i \in G_j} y_i w_i + x_j t_j \Bigr) &= \sum_{j \in H} \Bigl( \sum_{i \in G_j} \sum_{S \ni i}  z_S w_i + \sum_{S: G_j \cap S = \emptyset} z_S t_j \Bigr) \\
&= \sum_S z_S  \sum_{j \in H} \Bigl(  \sum_{i \in G_j \cap S} w_i + \max\{0, 1 - |G_j \cap S| \} t_j \Bigr) \\
&= \sum_S z_S \Psi(S) > \sum_S z_S V = V
\end{align*}

On the other hand, because of our greedy clustering, the given vectors $x^*, y^*$ satisfy:
\begin{align*}
\sum_{j \in H} \Bigl( \sum_{i \in G_j} y^*_i w_i + x_j^* t_j \Bigr) &= \sum_{j \in H} \Bigl( \sum_{i \in G_j} y^*_i w_i + x_j^* \sum_{j': \pi j' = j} v_{j'} \Bigr)
= \sum_{i \in \mathcal F} y^*_i w_i + \sum_{j \in \mathcal C}  v_{j} x^*_{\pi j} \\
&\leq \sum_{i \in \mathcal F} y^*_i w_i + \sum_{j \in \mathcal C}  v_{j} x^*_{j} \leq V
\end{align*}
where the last inequality holds by (\ref{sa3-LP-0}).  Thus, the hyperplane $\sum_{j \in H} ( \sum_{i \in G_j} y_i w_i + x_j t_j ) > V$ is violated by the solution $x^*, y^*$, and we can continue the Ellipsoid Algorithm.

To finish, we need to describe how to minimize $\Psi(S)$.  
\begin{proposition}
For a multi-knapsack $\mathcal M$, there is an algorithm to minimize $\Psi(S)$ with runtime $\poly(m,n,\Lambda)$.
\end{proposition}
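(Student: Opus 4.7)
The plan is to exploit the disjointness of the sets $\{G_j : j \in H\}$ guaranteed by Observation~\ref{filtering}(i), and then reduce to a pseudo-polynomial multi-dimensional knapsack DP. First I would observe two simplifications that we may assume without loss of generality about any minimizer $S$. (a) As already noted in the paper, $|S \cap G_j| \leq 1$ for every $j \in H$, since adding a second facility from $G_j$ strictly increases the $w(S \cap G_j)$ term of $\Psi$ without further reducing the $\max\{0, 1-|S \cap G_j|\} t_j$ term. (b) Any facility $i \in \mathcal F \setminus \bigcup_{j \in H} G_j$ contributes nothing to $\Psi(S)$ but still consumes knapsack capacity, so we may assume $S \subseteq \bigcup_{j \in H} G_j$. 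Together, choosing $S$ reduces to independently choosing, for each $j \in H$, either a single facility $i_j \in G_j$ (contributing $w_{i_j}$ to $\Psi$ and $f^\ell_{i_j}$ to knapsack $\ell$) or the "skip" option (contributing $t_j$ to $\Psi$ and $0$ to every knapsack), subject to the $L$ constraints $\sum_{j \in H} f^\ell_{i_j} \leq W_\ell$.

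I would then run a standard $L$-dimensional knapsack DP. Enumerate $H = \{j_1, \dots, j_h\}$ with $h \leq n$ and define
\[
D[k, w_1, \dots, w_L] = \min \Psi\text{-cost achievable using choices for } j_1, \dots, j_k \text{ with exactly } w_\ell \text{ used in knapsack } \ell.
\]
The recurrence at step $k+1$ takes the minimum, over either the skip option (inherit $D[k, w_1, \dots, w_L] + t_{j_{k+1}}$) or, over every $i \in G_{j_{k+1}}$, the pick option (inherit $D[k, w_1 - f^1_i, \dots, w_L - f^L_i] + w_i$, whenever every coordinate is nonnegative). The answer is $\min_{0 \leq w_\ell \leq W_\ell} D[h, w_1, \dots, w_L]$, and the minimizer can be recovered by the usual backpointer argument.

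The state space has size $|H| \cdot \prod_{\ell=1}^L (W_\ell + 1) = O(n \Lambda)$, and each state is updated by examining at most $|G_{j_{k+1}}| + 1 \leq m + 1$ options, so the overall runtime is $O(nm\Lambda) = \poly(n, m, \Lambda)$ as required. There is no real obstacle here: once the disjointness observation decouples the $|H|$ decisions, the problem is a textbook multi-dimensional knapsack whose pseudo-polynomial DP matches exactly the runtime allowance $\poly(n,m,\Lambda)$ (which is where the parameter $\Lambda = \prod_\ell W_\ell$ was introduced in the first place). The only mild subtlety worth flagging is that facilities outside $\bigcup_{j \in H} G_j$ must be ruled out explicitly, since otherwise one might worry that they could "absorb" knapsack capacity in some useful way — but as argued above they only ever hurt, so restricting to $S \subseteq \bigcup_{j \in H} G_j$ is safe.
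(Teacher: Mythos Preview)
Your proposal is correct and takes essentially the same approach as the paper: a pseudo-polynomial DP over the clients in $H$, with the table indexed by the $L$-dimensional vector of knapsack subsums (size $O(\Lambda)$), and at each step choosing either to skip $j$ (paying $t_j$) or to select one facility $i\in G_j$ (paying $w_i$). Your write-up is in fact more explicit than the paper's terse sketch; the only very minor quibble is that a facility outside $\bigcup_{j\in H}G_j$ is neutral to $\Psi$ rather than strictly hurting it, but since removing such a facility preserves knapsack feasibility, restricting to $S\subseteq\bigcup_{j\in H}G_j$ is indeed without loss of generality.
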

\begin{proof}
Let $H = \{j_1, \dots, j_k \}$, and use a dynamic-programming approach to iterate through $H$: process the values $j_r$ for $r = 1, \dots, k$ while maintaining a table, indexed by the possible subsums for each of the $L$ knapsack constraints, listing the minimum possible value of $\Psi(S)$ among all solutions $S \in \{j_1, \dots, j_r \}$ for $r \leq k$. The table size is at most $\Lambda$. At each step, to update the table, we need to consider the possible facility $i \in G_j$ to add to $S$.
\end{proof}

\begin{proposition}
For a matroid $\mathcal M$, there is a polynomial-time algorithm to minimize $\Psi(S)$.
\end{proposition}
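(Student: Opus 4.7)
The plan is to reduce the minimization of $\Psi(S)$ over $S \in \mathcal M$ to a weighted matroid intersection problem. First I would simplify the structure of an optimal $S$. Since $w_i \geq 0$ and $\mathcal M$ is downward closed, any $i \in S$ lying outside $E := \bigcup_{j \in H} G_j$ contributes nothing to $\Psi$ and may be removed without leaving $\mathcal M$. Combined with the already-noted reduction $|S \cap G_j| \leq 1$, this restricts attention to sets $S \subseteq E$ with at most one element per part $G_j$. For such $S$, direct expansion gives
\[
\Psi(S) = \sum_{j \in H} t_j + \sum_{i \in S} (w_i - t_{j(i)}),
\]
where $j(i) \in H$ is the unique index with $i \in G_{j(i)}$. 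Since $\sum_{j \in H} t_j$ is a constant, minimizing $\Psi$ is equivalent to minimizing $\sum_{i \in S} c_i$ with $c_i := w_i - t_{j(i)}$.

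Next I would set up two matroids on the ground set $E$: the restriction $\mathcal M|_E$, which inherits an independence oracle from $\mathcal M$, and the partition matroid $\mathcal M_P$ whose parts are the (pairwise disjoint, by Observation~\ref{filtering}) sets $G_j$, each of capacity $1$. A common independent set of $\mathcal M|_E$ and $\mathcal M_P$ is exactly a feasible $S$ in the reformulation above. I would then invoke Edmonds' weighted matroid intersection algorithm (or any of its polynomial-time variants) on $(\mathcal M|_E, \mathcal M_P)$ with weights $-c$. Standard algorithms return, for each cardinality $k$, the maximum $(-c)$-weight common independent set of size $k$; taking the minimum $c$-weight across $k = 0, 1, \ldots, \rank(\mathcal M)$ then yields the desired minimizer of $\Psi$.

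The only genuine obstacle is that $c$ may take negative values (whenever $t_{j(i)} > w_i$), so we cannot simply search for a common basis of maximum weight. This is handled either by searching over all cardinalities as above, or, more cleanly, by discarding up front all elements with $c_i \geq 0$ (they would never be included in a minimizer) and then applying standard max-weight matroid intersection with the remaining strictly positive weights $-c_i$ over the full rank range. Either reduction runs in time polynomial in $n$ and $m$ using only an independence oracle for $\mathcal M$, which completes the plan.
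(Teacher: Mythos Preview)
Your proposal is correct and follows essentially the same approach as the paper: reduce to minimum-weight matroid intersection between $\mathcal M$ (restricted to $\bigcup_{j \in H} G_j$) and the partition matroid on the disjoint blocks $G_j$, with item weights $w_i - t_{j(i)}$. The paper's proof is a terse four-line sketch stating exactly this reduction; your version fills in the justification for restricting to $E$, the explicit expansion $\Psi(S) = \sum_j t_j + \sum_{i \in S} c_i$, and the handling of possibly negative $c_i$ via the per-cardinality search, all of which are correct elaborations of the same idea.
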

\begin{proof}
We can assume that the optimal $S$ has $|S \cap G_j| \leq 1$ for each $j$.  Thus, we can view this as finding a minimum-weight independent set of the intersection of two matroids, namely $\mathcal M$ and a partition matroid defined by the constraint $|S \cap G_j| \leq 1$ for all $j$. (Note that sets $G_j$ are pairwise disjoint.) The weight of any item $i \in G_j$ is then $w_i - t_j$.
\end{proof}

\section{Approximation Algorithm for Inhomogeneous \normalfont{2S-MatSup}}
\label{sec:5-apprx}
For this, we use the generic reduction strategy of Section~\ref{sec:mu-apprx}, providing a 9-approximation for inhomogeneous \textbf{RW-MatSup}, inspired by a similar rounding algorithm of \cite{Harris2018} for $k$-supplier. We will show the following:
\begin{theorem}
There is a 9-approximation algorithm for inhomogeneous \textbf{RW-MatSup}. 
There is an efficiently-generalizable 11-approximation for 
inhomogeneous \textbf{2S-MatSup-Poly}, with $|\mathcal S| \leq 2^m$
\end{theorem}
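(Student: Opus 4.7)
The second claim reduces immediately to the first via Theorem~\ref{aga-thm0}: a 9-approximation for inhomogeneous \textbf{RW-MatSup} plugged in with $\rho = 9$ yields an efficiently-generalizable $(9+2) = 11$-approximation for \textbf{2S-MatSup-Poly} with $|\mathcal S| \leq 2^m$. So the real work is the 9-approximation for inhomogeneous \textbf{RW-MatSup}, for which the plan is to follow the template of Section~\ref{sec:mu-apprx} but replace the solve-or-cut subroutine with an iterative-rounding argument in the matroid polytope, inspired by~\cite{Harris2018}.

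First I would write the natural LP with variables $y_i \in [0,1]$ for fractional openings (constrained to lie in the independent-set polytope of $\mathcal M$) and $x_j \in [0,1]$ for fractional outlier indicators, covering constraints $x_j + \sum_{i \in G_j} y_i \geq 1$ for every $j \in \mathcal C$, and the budget constraint $\sum_i w_i y_i + \sum_j v_j x_j \leq V$; if the LP is infeasible, return INFEASIBLE. Next I would run $\text{GreedyCluster}(\mathcal C, R, -R)$, so that each representative $j \in H$ has the smallest radius in its cluster, the balls $G_j$ for $j \in H$ are pairwise disjoint, and $d(j', \pi j') \leq 2 R_{j'}$ for every $j' \in \mathcal C$. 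Aggregating outlier weights via $t_j = \sum_{j' : \pi j' = j} v_{j'}$ then collapses the LP onto $H$ while preserving feasibility, exactly as in the computation of $\Psi$ in Section~\ref{sec:mu-apprx}.

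I would then iteratively round the matroid polytope on the collapsed LP: at each step, identify a tight collection of covering and rank constraints, and use the exchange property of the matroid to either fix some $y_i$ to $\{0,1\}$ or to shift mass along a chain of representatives whose radii are non-increasing. Once all $y_i$ are integral, the resulting $S \in \mathcal M$ covers every cluster whose representative $j$ has $\sum_{i \in G_j} y_i = 1$, giving $d(j,S) \leq R_j$; the remaining representatives are paid for as outliers against $t_j$, and the aggregate budget inequality is preserved up to a constant factor from the rounding. For a general client $j'$ with $\pi j' = j$, the covering distance in the base case is $d(j', S) \leq d(j', j) + d(j, S) \leq 2 R_{j'} + R_j \leq 3 R_{j'}$. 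The factor of $9$ arises because the iterative rounding may be forced to route a representative's required facility through up to two adjacent representatives $j'', j'''$ of smaller radius; chaining triangle inequalities at the representative level and composing with the initial $2R_{j'}$ hop yields $9 R_{j'}$ in the worst case.

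The main obstacle is the combination of three intertwined constraints during the rounding: the matroid structure, which forces exchange-based arguments and can leave residual fractional coordinates; the outlier variables, which add degrees of freedom but must not be double-counted against the budget; and the inhomogeneity of the radii, which requires every ``redirection'' in the rounding to flow strictly from larger-radius to smaller-radius representatives so that the triangle-inequality chain stays bounded by $9 R_j$. Extending the analysis of~\cite{Harris2018} to respect all three simultaneously, and proving that the exchange graph used in the rounding always admits a monotone chain of the required length, is the crux of the argument. Once this rounding lemma is established, the guarantee~(\ref{rw-constraints2}) with $\rho = 9$ follows by direct substitution, and the efficiently-generalizable 11-approximation for \textbf{2S-MatSup-Poly} is immediate from Theorem~\ref{aga-thm0}.
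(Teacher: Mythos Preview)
Your reduction of the second claim to the first via Theorem~\ref{aga-thm0} is exactly right, and the high-level plan (natural LP with matroid polytope constraints, then iterative rounding in the spirit of~\cite{Harris2018}) matches the paper. But the concrete rounding you sketch and your account of where the factor $9$ comes from both diverge from the paper in ways that leave real gaps.

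First, the paper does \emph{not} collapse $\mathcal C$ onto a single set of GreedyCluster representatives and then round. Instead it maintains three evolving client sets $C_0$ (committed outliers), $C_1$ (committed covered, with pairwise-disjoint $G_j$), and $C_s$ (undecided), re-solving a ``Main LP'' at every iteration that enforces $z(G_j)\geq 1$ on $C_1$, $z(G_j)=0$ on $C_0$, and $z(G_j)\leq 1$ on $C_s$. A counting argument on tight constraints (intersection of the matroid polytope with a partition matroid on the disjoint $G_j$, $j\in C_1$) shows that any vertex has some $j\in C_s$ with $z(G_j)\in\{0,1\}$, and that client is then moved to $C_0$ or $C_1$.

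Second, and this is the key idea your proposal is missing: when a client $j$ enters $C_1$, the algorithm removes from $C_1\cup C_s$ every $j'$ with $G_j\cap G_{j'}\neq\emptyset$ \emph{and} $R_{j'}\geq R_j/2$. This halving rule is what controls the distance. A client $j$ that was ever in $C_1$ may later be evicted, but only by some $j_h$ with $R_{j_h}\leq R_j/2$; inducting backward through the eviction chain gives $d(j,S)\leq R_j + R_{j_h} + 3R_{j_h}\leq R_j + 2R_j = 3R_j$ via a convergent geometric series, not a bounded-length hop as you suggest. The chain can be arbitrarily long; what bounds it is that the radii halve at every step. Your statement that ``the iterative rounding may be forced to route \ldots\ through up to two adjacent representatives'' is not the right mechanism, and a rounding that merely moves mass ``from larger-radius to smaller-radius'' without the factor-$2$ gap would not give a constant.

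Finally, the $9$ arises as follows: a general client $j$ (or one dropped from $C_s$) is linked to some $j_h\in C_1$ with $G_j\cap G_{j_h}\neq\emptyset$ and $R_{j_h}\leq 2R_j$; then $d(j,S)\leq R_j + R_{j_h} + d(j_h,S)\leq R_j + R_{j_h} + 3R_{j_h}\leq R_j + 4\cdot 2R_j = 9R_j$. So the $9$ is one hop to a $C_1$-client of radius at most $2R_j$, composed with the $3R$ bound from the halving chain; it is not $2R_{j'} + $ two representative hops.
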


By Theorem~\ref{aga-thm0}, the second result follows immediately from the first, so we only consider the \textbf{RW-MatSup} setting.  Let us fix some matroid $\mathcal M$, radius demands $R_j$,  and weights $w_i, v_j, V$, and we assume access to the matroid rank function $r_{\mathcal{M}}$.  Now consider LP (\ref{s3-LP-1})-(\ref{s3-LP-4}), which we call the \emph{Preliminary LP}
\begin{align}
& x_j \geq 1 - \sum_{i \in G_j} y_i ~~ \forall j \label{s3-LP-1} \\
&\sum_{i \in \mathcal{F}}y_i w_i + \sum_{j \in \mathcal C} x_j v_j \leq V \label{s3-LP-2} \\
&\displaystyle\sum_{i \in U}y_i \leq r_{\mathcal{M}}(U), ~~\forall U \subseteq \mathcal{F} \label{s3-LP-3}\\
&0 \leq y_i, x_j \leq 1  \label{s3-LP-4}
\end{align}

If the \textbf{RW-MatSup} instance is feasible, then so is this LP (set $y_i = 1$ if $i \in S$, and setting $x_j = 1$ if $j$ is an outlier).  Although it has an exponential number of constraints, it can be solved in polynomial time via the Ellipsoid algorithm \cite{Ravi}. 

Starting with solution $x,y$ to this LP, we use an iterative rounding strategy. This is based on maintaining client sets $C_0, C_1, C_s$; the clients in the first group will definitely be outliers, the clients in the second group will definitely be non-outliers, and the clients in the second group are not yet determined yet. During the iterative process, we preserve the following invariants on the sets:
\begin{enumerate}[label=\textbf{I\arabic*}]
    \item \label{inv-1} For all $j,j' \in C_1$, with $j \neq j'$, we have $G_j \cap G_{j'} = \emptyset$.
    \item \label{inv-2} $C_0, C_1, C_s$ are pairwise disjoint.
\end{enumerate}

The iterative rounding is based on the following LP (\ref{s4-LP-aux-2})-(\ref{s4-LP-aux-6}), which we call the \emph{Main LP}.
\begin{align}
\text{minimize } &\displaystyle\sum_{i \in \mathcal{F}}z_i \cdot w_i + \sum_{j \in C_0} v_j + \sum_{j \in C_s} (1 - z(G_j)) v_j \\
\text{subject to }
&z(G_j) \geq 1, &\forall j \in C_1 \label{s4-LP-aux-2}\\
&z(G_j) = 0, &\forall j \in C_0 \label{s4-LP-aux-3} \\
&z(G_j) \leq 1, &\forall j \in C_s \label{s4-LP-aux-4} \\
&z(U) \leq r_{\mathcal{M}}(U), &\forall U \subseteq \mathcal{F} \label{s4-LP-aux-5} \\
&0 \leq z_i \leq 1 &\forall i \in \mathcal{F} \label{s4-LP-aux-6}
\end{align}

\begin{lemma}\label{s4-lem-2}
Let $z^{*}$ be an optimal vertex solution of the Main LP when $C_s, C_0, C_1$ satisfy \ref{inv-1}, \ref{inv-2}. Then if $C_s \neq \emptyset$, there exists at least one $j \in C_s$ with $z^{*}(G_j) \in \{0,1\}$. Moreover, if $C_s = \emptyset$ then the solution is integral, i.e., for all $i \in \mathcal{F}$ we have $z^{*}_i \in \{0,1\}$.
\end{lemma}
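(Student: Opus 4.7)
The plan is to argue by contradiction via a standard vertex / linear-independence analysis. Write $F' = \{i \in \mathcal F : 0 < z^*_i < 1\}$ for the fractional coordinates of $z^*$. For the first part of the lemma I would assume, toward contradiction, that $z^*(G_j) \in (0,1)$ for every $j \in C_s$; for the second part I would assume $F' \neq \emptyset$. In both cases the goal is to conclude $F' = \emptyset$: integrality of $z^*$ then forces each $z^*(G_j)$ with $j \in C_s$ to be a nonnegative integer at most one, which simultaneously gives the second statement and contradicts the first-part assumption. Under either hypothesis the inequalities $z(G_j) \leq 1$ for $j \in C_s$ are slack, and the equalities $z(G_j) = 0$ for $j \in C_0$ pin $z_i = 0$ for $i \in \bigcup_{j \in C_0} G_j$, so these coordinates lie outside $F'$. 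The tight non-bound constraints supported on $F'$ are therefore (i) matroid rank equalities $z^*(U) = r_{\mathcal M}(U)$, which by standard submodular uncrossing may be replaced by an equivalent chain $U_1 \subsetneq \cdots \subsetneq U_p$, and (ii) covering equalities $z^*(G_j) = 1$ for a subfamily $\tilde C_1 \subseteq C_1$, whose sets are pairwise disjoint by \ref{inv-1}. The vertex condition says that the restriction of these tight constraints to $F'$ has linear rank exactly $|F'|$.

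The second step uses integrality of the right-hand sides to rule out ``singletons.'' For any $j \in \tilde C_1$ the identity $z^*(G_j \cap F') = 1 - z^*(G_j \setminus F')$ has integer right-hand side, so $|G_j \cap F'| \neq 1$; and after pruning redundant chain sets with $U_k \cap F' = U_{k-1} \cap F'$ one likewise obtains $|(U_k \setminus U_{k-1}) \cap F'| \neq 1$, so each of these sets has size at least $2$. A two-token argument now gives each $i \in F'$ a token for the innermost chain set $U_k$ containing $i$, and a token for the unique $G_j \in \tilde C_1$ containing $i$ when such a $j$ exists. Each tight constraint then collects at least $2$ tokens, so the total demand $2(|\mathcal L'| + |\tilde C_1'|) = 2|F'|$ matches the supply of $2|F'|$. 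For this equality to be tenable, every $i \in F'$ must both lie in $U_p$ and belong to some $G_j$ with $j \in \tilde C_1$; equivalently, $F' \subseteq U_p$ and $F' = \bigsqcup_{j \in \tilde C_1'} (G_j \cap F')$, where $\tilde C_1'$ denotes the tight $C_1$ constraints that actually meet $F'$.

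From this rigid structure I would then extract the contradiction: the vectors $\chi_{U_p}$ and $\sum_{j \in \tilde C_1'} \chi_{G_j}$ agree on $F'$ (both equal $\mathbf{1}_{F'}$, the first because $U_p \cap F' = F'$, the second because the $G_j \cap F'$ partition $F'$), so the tight constraint $z(U_p) = r_{\mathcal M}(U_p)$ is, after restriction to $F'$, a linear combination of the constraints $\{z(G_j) = 1 : j \in \tilde C_1'\}$. The tight system restricted to $F'$ therefore has rank strictly less than $|\mathcal L'| + |\tilde C_1'| = |F'|$, contradicting the vertex condition and forcing $F' = \emptyset$. The main obstacle is precisely this last step, where one must upgrade the token-counting equality into an honest linear dependence; a clean alternative I would consider is to invoke the standard iterative-rounding theorem for matroid polytopes intersected with disjoint covering constraints (equivalently, Edmonds' matroid-intersection theorem after the substitution $\bar z = 1 - z$ converts the $C_1$ covering inequalities into packing constraints of a partition matroid), though the direct argument sketched above is essentially self-contained.
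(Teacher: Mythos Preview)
Your proposal is correct and complete. The uncrossing plus two-token counting argument you outline is the standard Jain-style proof that a vertex of a matroid polytope intersected with a disjoint family of cardinality constraints must be integral, and you apply it correctly here: once the $C_s$ inequalities are slack and the $C_0$ equalities are absorbed into the box constraints, the only tight constraints supported on the fractional set $F'$ are (i) matroid rank equalities, uncrossable to a chain, and (ii) disjoint $C_1$ covering equalities. Your observation that each tight constraint meets $F'$ in at least two elements, together with the linear-dependence step showing $\chi_{U_p}|_{F'} = \sum_{j \in \tilde C_1'} \chi_{G_j}|_{F'}$, is exactly what is needed to force $F' = \emptyset$.

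The paper takes a different, shorter route. After observing that the $C_s$ constraints are slack, it simply asserts that $z^*$ is an extreme point of the system consisting of the matroid constraints, the $C_1$ equalities $z(G_j)=1$, the box constraints, and the $C_0$ zero constraints, and then invokes Edmonds' matroid-intersection theorem: this system is (after projecting away the zeroed coordinates) the intersection of the polytope of $\mathcal M$ with a partition-matroid polytope determined by the disjoint sets $\{G_j : j \in C_1\}$, whose vertices are integral. You in fact mention exactly this alternative in your last sentence, noting that the substitution $\bar z = 1 - z$ converts the $C_1$ covering constraints into partition-matroid packing constraints. The trade-off is that your argument is self-contained and does not rely on matroid-intersection integrality as a black box, while the paper's argument is a one-line appeal to that theorem; the paper also makes a slightly loose remark that all $C_1$ constraints must be tight (which is neither obviously justified nor actually needed), whereas your treatment avoids that step entirely.
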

\begin{proof}
Suppose that $z^*(G_j) \in (0,1)$ for all $j \in C_s$. In this case, all constraints (\ref{s4-LP-aux-4}) are slack. All the constraints (\ref{s4-LP-aux-2}) must be then be tight; if not, we could increment or decrement $z_i$ for $i \in G_j$ while preserving all constraints.  Thus, $z^*$ must be an extreme point of the following system:
\begin{align}
&z(G_j) = 1, &\forall j \in C_1 \label{s5-LP-aux-2}\\
&z(U) \leq r_{\mathcal{M}}(U), &\forall U \subseteq \mathcal{F} \label{s5-LP-aux-5} \\
&0 \leq z_i \leq 1 &\forall i \in \mathcal{F} \\
&0 = z_i & \forall i \in G_j, j \in C_0 
\label{s5-LP-aux-6}
\end{align}

But this is an intersection of two matroid polytopes (the polytope for $\mathcal M$ projected to elements $i$ with $z_i \neq 0$, and the partition matroid corresponding to the disjoint sets $G_j$ for $j \in C_1$), whose extreme points are integral.
\end{proof}

\begin{algorithm}[t]
\begin{algorithmic}
\STATE {Solve the Preliminary LP to get $x, y$}
\IF {no feasible LP solution} \RETURN INFEASIBLE \ENDIF
\STATE {$C_0 \gets \emptyset$}\;
\STATE {$(C_1, \pi_t) \gets \text{GreedyCluster}(\{j \in \mathcal{C} ~|~ y(G_j) > 1 \}, R, -R)$}
\STATE {$C_{s} \gets \{j \in \mathcal{C} ~|~ y(G_j) \leq 1 \text{ and } \forall j' \in C_1: (G_j \cap G_{j'} = \emptyset \vee R_j < R_{j'} / 2)\}$}
\WHILE {$C_s \neq \emptyset$}
\STATE {Solve the Main LP using the current $C_s, C_0, C_1$, and get a basic solution $z$}
\STATE {Find a $j \in C_s$ with $z(G_j) \in \{0,1\}$ and set
$C_s \gets C_s \setminus \{j\}$}
\IF {$z(G_j) = 0$}
\STATE {$C_0 \gets C_0 \cup \{j\}$}
\ELSE
\STATE {$C_1 \gets C_1 \cup \{j\}$}
\FOR {any client $j' \in C_1 \cup C_s$ with $G_{j} \cap G_{j'} \neq \emptyset$ and $R_{j'} \geq R_{j}/2$}
\STATE {$C_s \gets C_s \setminus \{ j' \}$, $C_1 \gets C_1 \setminus \{j ' \}$}
\ENDFOR
\ENDIF
\ENDWHILE
\STATE {Solve the Main LP once more using the current $C_s = \emptyset, C_0, C_1$, and get a solution $z^{\text{final}}$}
\RETURN $S = \{i: z^{\text{final}}_i = 1 \}$
\end{algorithmic}
\caption{Iterative Rounding for \textbf{RW-MatSup}}\label{alg-4}
\end{algorithm}

Algorithm \ref{alg-4} shows the main iterative rounding process. We use $z^{(h)}$ to denote the solution obtained in iteration $h$, and $C^{(h)}_s, C^{(h)}_0, C^{(h)}_1$ for the client sets at the end of the $h^{\text{th}}$ iteration. We also use $z^{(T+1)}$ for $z^{\text{final}}$,  where $T$ is the total number of iterations of the main while loop. We also let $C^{(0)}_s, C^{(0)}_0, C^{(0)}_1$ be the client sets before the start of the loop, and we write $C^{(h)}$ as shorthand for the ensemble $C^{(h)}_0, C^{(h)}_1, C^{(h)}_s$.

\begin{lemma}\label{s4-lem-3}
For every $h=0,1,\hdots T$, the vector $z^{(h)}$ and sets $C^{(h)}$ satisfy invariants \ref{inv-1} and \ref{inv-2} satisfy the Main LP with objective function value at most $V$.
\end{lemma}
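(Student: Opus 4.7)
I would prove this by induction on $h$, with the inductive hypothesis consisting of three assertions: (i) $C^{(h-1)}$ satisfies \ref{inv-1} and \ref{inv-2}; (ii) the Main LP instantiated with $C^{(h-1)}$ is feasible; (iii) $z^{(h-1)}$ (or, for $h=0$, the original LP witness $y$) achieves Main LP objective at most $V$.

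\emph{Base case} $(h=0)$. The set $C_1^{(0)}$ is produced by \text{GreedyCluster}, so by Observation~\ref{filtering} its members have pairwise disjoint $G_j$'s, giving \ref{inv-1}. Since $C_0^{(0)}=\emptyset$, and $C_1^{(0)}$ and $C_s^{(0)}$ are separated by the threshold $y(G_j)>1$ vs.\ $y(G_j)\le 1$, we get \ref{inv-2}. Taking $z=y$ as a Main LP witness: constraints (\ref{s4-LP-aux-2}), (\ref{s4-LP-aux-3}), (\ref{s4-LP-aux-4}) follow from the very definitions of $C_1^{(0)},C_0^{(0)},C_s^{(0)}$, and (\ref{s4-LP-aux-5})--(\ref{s4-LP-aux-6}) are inherited from the outer LP. Using (\ref{s3-LP-1}) we have $1-y(G_j)\le x_j$, so the Main LP objective at $z=y$ is at most $\sum_i y_iw_i+\sum_j x_jv_j\le V$ by (\ref{s3-LP-2}).

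\emph{Inductive step.} Suppose the claim holds at $h-1$. Then the Main LP with $C^{(h-1)}$ is feasible with optimum $\le V$, so the basic solution $z^{(h)}$ returned in iteration $h$ has objective at most $V$ at $C^{(h-1)}$. By Lemma~\ref{s4-lem-2} there is some $j\in C_s^{(h-1)}$ with $z^{(h)}(G_j)\in\{0,1\}$, and the algorithm reclassifies it and performs removals to obtain $C^{(h)}$. For feasibility of $z^{(h)}$ under $C^{(h)}$: matroid/box constraints are unchanged, and every newly enforced constraint in (\ref{s4-LP-aux-2})--(\ref{s4-LP-aux-4}) matches the observed value $z^{(h)}(G_j)\in\{0,1\}$ for the reclassified $j$; all other constraints are inherited. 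For the objective, compare term by term between $C^{(h-1)}$ and $C^{(h)}$: if $j$ joins $C_0$ with $z^{(h)}(G_j)=0$, the $C_s$-term $(1-0)v_j$ is exactly replaced by the $C_0$-term $v_j$; if $j$ joins $C_1$ with $z^{(h)}(G_j)=1$, its $C_s$-term $(1-1)v_j=0$ vanishes and the further removals of $j'\in C_s$ only decrease the $\sum_{C_s}(1-z(G_{j'}))v_{j'}$ contribution. In either case the objective does not increase, so it remains $\le V$.

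\emph{Invariants at $C^{(h)}$.} Invariant \ref{inv-2} is preserved trivially since each client moves between pairwise-disjoint sets only. Invariant \ref{inv-1} is the delicate piece: if $j$ has just joined $C_1$, I must show $G_j\cap G_{j^\star}=\emptyset$ for every surviving $j^\star\in C_1^{(h)}\setminus\{j\}$. The removal loop already eliminates any $j^\star\in C_1^{(h-1)}$ with $G_j\cap G_{j^\star}\ne\emptyset$ and $R_{j^\star}\ge R_j/2$, so it suffices to exclude the case $R_{j^\star}<R_j/2$ and $G_j\cap G_{j^\star}\ne\emptyset$. I would argue by a sub-induction on the iteration $h^\star$ at which $j^\star$ entered $C_1$, using the fact that $C_s$ only ever shrinks (so $j\in C_s^{(h-1)}\subseteq C_s^{(h^\star-1)}\subseteq C_s^{(0)}$). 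If $h^\star=0$, then the initial definition of $C_s^{(0)}$ together with $j\in C_s^{(0)}$ forces either $G_j\cap G_{j^\star}=\emptyset$ or $R_j<R_{j^\star}/2$; but $R_j>2R_{j^\star}$ rules out the second, a contradiction. If $h^\star\ge 1$, then $j\in C_s^{(h^\star-1)}$ and $R_j>2R_{j^\star}\ge R_{j^\star}/2$, so the removal loop run at iteration $h^\star$ would have discarded $j$ from $C_s$, contradicting $j\in C_s^{(h-1)}$.

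\emph{Main obstacle.} The routine bookkeeping for (a) LP feasibility and (b) the objective inequality is straightforward telescoping. The real subtlety is the preservation of \ref{inv-1}, because the asymmetric threshold $R_{j'}\ge R_j/2$ in the removal loop leaves open an \emph{a priori} gap where an older, smaller-radius $C_1$-representative could overlap with a newly promoted $C_s$-client. Closing this gap requires the two-level induction described above, exploiting the monotonicity $C_s^{(h)}\subseteq C_s^{(h')}$ for $h\ge h'$ together with the factor-of-two slack baked into the initial definition of $C_s^{(0)}$.
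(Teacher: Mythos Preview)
Your proposal is correct and follows essentially the same inductive approach as the paper: the base case uses $z=y$ as a Main LP witness and bounds the objective via constraint (\ref{s3-LP-1})--(\ref{s3-LP-2}); the inductive step checks that reclassifying $j$ preserves feasibility and can only decrease the objective; and the key \ref{inv-1} argument traces back to the iteration $h^\star$ when the conflicting $j^\star$ first entered $C_1$, using monotonicity of $C_s$ together with the definition of $C_s^{(0)}$ (for $h^\star=0$) or the removal loop (for $h^\star\ge 1$) to derive a contradiction. The paper compresses your two-case ``sub-induction'' into a single sentence, but the logic is identical.
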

\begin{proof}
We show this by induction on $h$. For the base case $h = 0$, consider the fractional solution $z = y$. This satisfies constraint (\ref{s4-LP-aux-2}), since the clients $j \in C^{(0)}_t$ have $y(G_j) > 1$. Constraint (\ref{s4-LP-aux-3}) holds vacuously since $C_0^{(0)} = \emptyset$.  Constraint (\ref{s4-LP-aux-4}) is satisfied, because all clients $j \in C^{(0)}_s$ have $y(G_j) \leq 1$.
Constraint (\ref{s4-LP-aux-5}) holds due to $y$ already satisfying (\ref{s3-LP-3}). The greedy clustering step for $C^{(0)}_t$ ensures property \ref{inv-1}. Since $C^{(0)}_s$ contains only clients $j$ with $y(G_j) \leq 1$, \ref{inv-2} also holds.

Finally, for the objective value, we have
$$
\sum_{i \in \mathcal F} z_i w_i + \sum_{j \in C_0^{(0)}} v_j + \sum_{j \in C_s^{(0)}} (1 - z(G_j)) v_j = \sum_{i \in \mathcal F} z_i w_i  + \sum_{j \in C_s^{(0)}} (1 - y(G_j)) v_j =  \sum_{i \in \mathcal F} z_i w_i  + \sum_{j \in C_s^{(0)}} x_j v_j \leq V
$$
where here the penultimate inequality holds since, for $j \in C_s^{(0)}$, we have $y(G_j) \leq 1$ and hence $x_j = 1 - y(G_j) = 1 - z(G_j)$

Consider the induction step $h > 0$. By the inductive hypothesis the solution $z^{(h-1)}$ is feasible for the Main LP defined using sets $C^{(h-1)}_0, C^{(h-1)}_1, C^{(h-1)}_s$. By Lemma~\ref{s4-lem-2}, we can find an optimal vertex solution  $z^{(h)}$ with $z^{(h)}_i(G_j) \in \{0,1 \}$ for some $j_h \in C^{(h-1)}_s$; this new solution can only decrease the objective value. We need to ensure that $z^{(h)}$ remains feasible and the objective value does not increase after updating $C_s, C_0, C_1$, and these new sets satisfy the proper invariants.

If $z^{(h)}(G_{j_h}) = 0$, then $C^{(h)}_s = C^{(h-1)}_s \setminus \{j_h\}$ and $C_0^{(h)} = C_0^{(h)} \cup \{j_h \}$, and we need to verify (\ref{s4-LP-aux-3}) for $j_h$; this holds since $z^{(h)}(G_{j_h}) = 1$. 
The objective value does not change since previously $j_h \in C_s^{(h-1)}$ and $1 - z(G_{j_h}) = 1$.
Similarly, suppose $z^{(h)}(G_{j_h}) = 1$, so that $C^{(h)}_s = C^{(h-1)}_s \setminus \{j_h\}$, and $C^{(h)}_t = C^{(h-1)}_t \cup \{j_h\}$. Here we only need to verify (\ref{s4-LP-aux-2}) for $j_h$, but this holds since $z^{(h)}(G_{j_h}) = 1$. 

\ref{inv-2} remains true because we just moved $j_h$ from $C^{(h-1)}_s$ to either $C^{(h-1)}_0, C^{(h-1)}_1$ or discarded it. \ref{inv-1} can only be violated if $j_h\in C^{(h)}_1$, and there was a $j \in C^{(h-1)}_1$ with $G_j \cap G_{j_h} \neq \emptyset$ and $R_j < \frac{R_{j_h}}{2}$. However, this is impossible, because when $j$ first entered $C_1^{(h')}$ at some earlier time $h'$ (possibly $h' = 0$) it should have removed $j_h$ from $C_s$.
\end{proof}

In particular, Lemma \ref{s4-lem-2} at each iteration $h$ ensures that we can always find a $j \in C_s$ with $z^{(h)}(G_j) \in \{0,1\}$, and remove it from $C_s^{(h+1)}$. Thus the loop must terminate after at most $n$ iterations, and the final values $T$,  $z^{\text{final}}$, and so on are well-defined.

\begin{lemma}\label{s4-lem-5}
If $j \in C^{(h)}_1$ for any $h$, then $d(j, S) \leq 3R_j$. 
\end{lemma}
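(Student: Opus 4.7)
Plan: I will prove Lemma~\ref{s4-lem-5} by constructing a ``replacement chain'' from $j$ to some element of $C_1^{(T)}$, establishing that the radii halve strictly along this chain, and then bounding $d(j, S)$ by a telescoping triangle inequality.

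Construct the chain $j = j_0, j_1, \ldots, j_k$ inductively. If $j_r \in C_1^{(T)}$, stop with $k = r$. Otherwise $j_r$ was removed from $C_1$ at some iteration of the main while loop, and we let $j_{r+1}$ be the client whose insertion into $C_1$ triggered that removal; by the algorithm's removal rule we have $G_{j_r} \cap G_{j_{r+1}} \neq \emptyset$ and $R_{j_r} \geq R_{j_{r+1}}/2$. The chain is finite because the iteration indices along it strictly increase, so it terminates with some $j_k \in C_1^{(T)}$.

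The key structural ingredient is the halving property $R_{j_{r+1}} < R_{j_r}/2$, which I will justify by carefully tracking $C_s$. When $j_r$ enters $C_1$---either initially via the definition of $C_s^{(0)}$, or during the while loop via the explicit cleanup step---every $j' \in C_s$ with $G_{j_r} \cap G_{j'} \neq \emptyset$ and $R_{j'} \geq R_{j_r}/2$ is immediately purged. Since $C_s$ only shrinks over later iterations, any future remover $j_{r+1}$ of $j_r$ must lie in $C_s$ at that point and hence must have survived the earlier purge; combined with $G_{j_r} \cap G_{j_{r+1}} \neq \emptyset$, this forces $R_{j_{r+1}} < R_{j_r}/2$. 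Iterating, $R_{j_r} < R_j/2^r$ for all $r \geq 1$.

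To conclude, since $C_s^{(T)} = \emptyset$ at termination, Lemma~\ref{s4-lem-2} implies $z^{\text{final}}$ is integral, so the constraint $z^{\text{final}}(G_{j_k}) \geq 1$ from \eqref{s4-LP-aux-2} produces a facility $i \in G_{j_k} \cap S$. Triangle inequality along the chain, together with $d(j_r, j_{r+1}) \leq R_{j_r} + R_{j_{r+1}}$, gives
\[
d(j, i) \;\leq\; \sum_{r=0}^{k-1} (R_{j_r} + R_{j_{r+1}}) + R_{j_k} \;=\; R_j + 2 \sum_{r=1}^{k} R_{j_r} \;<\; R_j + 2 R_j \sum_{r \geq 1} 2^{-r} \;=\; 3 R_j.
\]
The special case $k = 0$ ($j \in C_1^{(T)}$) is immediate from integrality. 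The main obstacle is the rigorous verification of the halving inequality: one must distinguish the initial and within-loop insertions of $j_r$ into $C_1$, and argue that the monotonicity of $C_s$ really does imply $j_{r+1}$ survived the corresponding purge, which is exactly what rules out $R_{j_{r+1}} \geq R_{j_r}/2$.
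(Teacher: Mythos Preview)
Your proof is correct and follows essentially the same approach as the paper. The paper phrases the argument as backward induction on $h$ (deriving $d(j,S) \leq R_j + 4R_{j_h} \leq 3R_j$ directly from the inductive hypothesis applied to the remover $j_h$), whereas you unroll this into an explicit replacement chain and sum the resulting geometric series; the underlying halving argument $R_{j_{r+1}} < R_{j_r}/2$ and its justification via the purge step are identical in both.
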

\begin{proof}
We show this by induction backward on $h$. When $h=T+1$ this is clear, since then $z(G_j) = 1$, so we will open a facility in $G_j$ and have $d(j, S) \leq R_j$.  Otherwise, suppose that $j$ was removed from $C_1$ in iteration $h$. This occurs because $j_{h}$, the client chosen in iteration $h$, entered $C_1$. Therefore, $G_j \cap G_{j_{h}} \neq \emptyset$. Moreover, because $j_{h} \in C^{(0)}_s$ and also $j_h$ was not removed from $C_s$ when $j$ first entered $C_1$, we have $R_{j_{h}} \leq R_j / 2$. Finally, since $j_{h}$ is present in $C^{(h)}_1$, the inductive hypothesis applied to $j_h$ gives $d(j_{h}, S \leq 3R_{j_{h}}$. Overall we get $d(j, S) \leq R_j + R_{j_{h}} + d(j_{h},S) \leq 3R_j$.
\end{proof}

\begin{lemma}\label{s4-lem-6}
If $j \notin C^{\text{final}}_0$, we have $d(j, S) 
\leq 9 R_j$.
\end{lemma}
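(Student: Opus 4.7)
The plan is a case analysis on the trajectory of $j$ through the sets $C_0$, $C_1$, $C_s$ maintained by Algorithm~\ref{alg-4}, reducing every case to an invocation of Lemma~\ref{s4-lem-5} (possibly via an intermediate ``witness'' client $j^{\star} \in C_1^{(h)}$). Since $C_s^{\text{final}} = \emptyset$ and $j \notin C_0^{\text{final}}$, $j$ falls into at least one of four cases: (a) $j \in C_1^{(h)}$ for some $h$; (b) $j$ lies in $C_s^{(h)}$ at some stage and is subsequently removed by the inner \textbf{for} loop when some $j_h$ is added to $C_1$; (c) $y(G_j) > 1$ and $j \notin C_1^{(0)}$, so $j$ is represented by some $\pi_t j \in C_1^{(0)}$ in the initial greedy cluster; or (d) $y(G_j) \leq 1$ and $j \notin C_s^{(0)}$, so by the very definition of $C_s^{(0)}$ there exists $j' \in C_1^{(0)}$ with $G_j \cap G_{j'} \neq \emptyset$ and $R_j \geq R_{j'}/2$.

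Case (a) follows immediately from Lemma~\ref{s4-lem-5}, giving $d(j,S) \leq 3 R_j$. For case (c), the greedy-clustering ordering $g = -R$ together with Observation~\ref{filtering} ensures $R_{\pi_t j} \leq R_j$, so $d(j, \pi_t j) \leq R_j + R_{\pi_t j} \leq 2 R_j$; chaining with Lemma~\ref{s4-lem-5} applied to $\pi_t j \in C_1^{(0)}$, which gives $d(\pi_t j, S) \leq 3 R_j$, produces $d(j,S) \leq 5 R_j$. The key point here is that the witness has a \emph{smaller} radius than $j$, which keeps the constant small.

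The main obstacle, and the source of the factor $9$, is cases (b) and (d), where the radius inequality runs the other way: the witness $j^{\star} \in C_1^{(h)}$ only satisfies $R_{j^{\star}} \leq 2 R_j$ (via the condition $R_j \geq R_{j^{\star}}/2$ that is baked into both the discard rule in the \textbf{while} loop and the definition of $C_s^{(0)}$). This yields $d(j, j^{\star}) \leq R_j + R_{j^{\star}} \leq 3 R_j$ and, by Lemma~\ref{s4-lem-5}, $d(j^{\star}, S) \leq 3 R_{j^{\star}} \leq 6 R_j$, so the triangle inequality gives $d(j,S) \leq 9 R_j$. To finish, I would verify that the four cases are exhaustive: in case (b), the only alternative ways for $j$ to leave $C_s$ are to move into $C_1$ (which collapses to case (a)) or into $C_0$ (which would contradict $j \notin C_0^{\text{final}}$, since $C_0$ is never shrunk by the algorithm); the fact that the designated witness always belongs to some $C_1^{(h)}$, and hence that Lemma~\ref{s4-lem-5} is applicable, follows in each case directly from the construction of $C_1^{(0)}$ or from inspecting the \textbf{for} loop that triggers the discard in case (b).
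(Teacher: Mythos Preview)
Your proposal is correct and follows essentially the same approach as the paper: a case analysis on how $j$ interacts with the sets $C_0, C_1, C_s$, each case reducing via a witness in some $C_1^{(h)}$ to Lemma~\ref{s4-lem-5}, with the $9R_j$ bound arising precisely in the cases where the witness's radius is only guaranteed to be at most $2R_j$. The paper organizes the cases by the value of $y(G_j)$ and then by the fate of $j$ within the loop, whereas you organize directly by fate; the arithmetic and the use of the witness in each branch are identical.
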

\begin{proof}
First suppose $y(G_j) > 1$. In this case, the greedy clustering step to form $C^{(0)}_1$ gives $G_j \cap G_{j'} \neq \emptyset$ for some $j' \in C^{(0)}_1$ with $R_{j'} \leq R_j$. Furthermore, Lemma~\ref{s4-lem-5} gives $d(j', S) \leq 3 R_{j'}$. Overall, $d(j, S) \leq d(j, j') + d(j', S) \leq R_j + R_{j'} + 3 R_{j'} \leq 5 R_j$. 

Next, suppose that $j \in C^{(0)}_s$, but $j$ was later removed from $C_s^{(h)}$ for $h > 0$. If $j$ was moved into $C_1^{(h)}$, then Lemma~\ref{s4-lem-5} would give $d(j, S) \leq 3 R_j$. If $j$ was moved into $C_0^{(h)}$, it remains in $C^{\text{final}}_0$ and there is nothing to show. Finally, suppose $j$ was removed from $C_s^{(h)}$ because $G_j \cap G_{j_h} \neq \emptyset$ and $R_j \geq R_{j_h}/2$, where client $j_h$ entered $C_1^{(h)}$. By Lemma~\ref{s4-lem-5}, we have $d(j_h, S) \leq 3 R_{j_h}$.
So $d(j, S) \leq d(j, j_h) + d(j_h, S) \leq R_j + R_{j_h} + 3 R_{j_h} = R_j + 4 R_{j_h}$. Since $R_{j_h} \leq 2 R_j$, this is at most $9 R_j$. 

The case where $y(G_j) \leq 1$ but $j \notin C_s^{(0)}$ is completely analogous: there must be some $j' \in C_1^{(0)}$ with $G_j \cap G_{j'} \neq \emptyset$ and $R_j \geq R_{j'}/2$, and again $d(j', S) \leq 3 R_{j'}$ and $d(j,S) \leq 9 R_j$.
\end{proof}

\begin{theorem}\label{s4-lem-4a}
The set $S$ returned by Algorithm~\ref{alg-4} satisfies the matroid constraint, and is a 9-approximation for the budget constraint.
\end{theorem}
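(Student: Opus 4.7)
The plan is to combine the invariants already established for Algorithm~\ref{alg-4} with a careful reading of the final LP objective. First I would observe that the while loop terminates: by Lemma~\ref{s4-lem-2}, each iteration removes one element from $C_s$, so after at most $n$ iterations we have $C_s = \emptyset$ and $z^{\text{final}}$ is well-defined. Then, since the final Main LP is solved with $C_s = \emptyset$, another application of Lemma~\ref{s4-lem-2} shows that $z^{\text{final}}$ is integral, so $S = \{i : z^{\text{final}}_i = 1\}$ is a faithful encoding of $z^{\text{final}}$.

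Next I would handle the matroid constraint. Since $z^{\text{final}}$ is feasible for the Main LP, it satisfies~\eqref{s4-LP-aux-5}, i.e.\ $z^{\text{final}}(U) \leq r_{\mathcal{M}}(U)$ for every $U \subseteq \mathcal F$. Specializing to $U = S$ and using integrality gives $|S| = z^{\text{final}}(S) \leq r_{\mathcal M}(S)$, which forces $S$ to be independent in $\mathcal M$.

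For the budget side, I would use Lemma~\ref{s4-lem-3} applied at the final iteration: the objective value of the Main LP evaluated at $z^{\text{final}}$ is at most $V$. Because $C_s^{\text{final}} = \emptyset$ and $z^{\text{final}}$ is $\{0,1\}$-valued, this objective simplifies to
\[
\sum_{i \in S} w_i + \sum_{j \in C_0^{\text{final}}} v_j \leq V.
\]
Finally, I would invoke Lemma~\ref{s4-lem-6}, which states that every client $j$ outside $C_0^{\text{final}}$ satisfies $d(j,S) \leq 9 R_j$. Equivalently, the set of ``9-outliers'' $\{j \in \mathcal C : d(j,S) > 9 R_j\}$ is contained in $C_0^{\text{final}}$, so
\[
\sum_{i \in S} w_i + \sum_{j \in \mathcal C : d(j,S) > 9 R_j} v_j \leq \sum_{i \in S} w_i + \sum_{j \in C_0^{\text{final}}} v_j \leq V,
\]
which is exactly the 9-approximate version~\eqref{rw-constraints2} of the budget constraint.

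The main conceptual hurdle is already absorbed into the prior lemmas: Lemma~\ref{s4-lem-3} keeps the Main LP objective monotone and bounded by $V$ across iterations, Lemma~\ref{s4-lem-2} guarantees vertex-solution structure (so integrality is attained once $C_s$ empties), and Lemma~\ref{s4-lem-6} yields the distance guarantee. So the remaining work for this theorem is essentially bookkeeping: verifying that ``Main LP value $\leq V$'' collapses exactly to the desired inequality when $z^{\text{final}}$ is integral and $C_s$ is empty, and that the integral polytope constraint forces matroid independence.
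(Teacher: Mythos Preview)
Your proposal is correct and follows essentially the same route as the paper: integrality of $z^{\text{final}}$ from Lemma~\ref{s4-lem-2} once $C_s=\emptyset$, the objective bound $\leq V$ from Lemma~\ref{s4-lem-3}, and the containment of $9$-outliers in $C_0^{\text{final}}$ from Lemma~\ref{s4-lem-6}. You are in fact slightly more explicit than the paper in two places: you spell out the termination argument and you verify matroid independence via $|S|=z^{\text{final}}(S)\le r_{\mathcal M}(S)$, which the paper leaves implicit.
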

\begin{proof}
When the algorithm terminates with $C_s = \emptyset$, Lemma~\ref{s4-lem-2} ensure the solution $z^{\text{final}}$ is integral. By Lemma~\ref{s4-lem-6}, any client $j$ with $d(j, S) > 9 R_j$ must have $j \in C^{\text{final}}_0$. Hence, $\sum_{j: d(j, S) > 9 R_j} v_j \leq \sum_{j \in C_0} v_j$. For the facility costs, we have $\sum_{i \in S} w_i = \sum_i z_i^{\text{final}} w_i$. Finally, by Lemma~\ref{s4-lem-3}, and noting that $C_s^{\text{final}} = \emptyset$, we have $\sum_{i} z_i^{\text{final}} w_i + \sum_{j \in C_0} v_j \leq V$.
\end{proof}

\section{Acknowledgements}

A preliminary version of this work appeared in the \emph{Proc.\ International Workshop on Approximation Algorithms for Combinatorial Optimization Problems (APPROX), 2021}.

Brian Brubach was supported in part by NSF awards CCF-1422569 and CCF-1749864, and by research awards from Adobe. Nathaniel Grammel and Leonidas Tsepenekas were supported in part by NSF awards CCF-1749864 and CCF-1918749, and by research awards from Amazon and Google. Aravind Srinivasan was supported in part by NSF awards CCF-1422569, CCF-1749864 and CCF-1918749, and by research awards from Adobe, Amazon, and Google.

The authors sincerely thank Chaitanya Swamy as well as the referees of earlier versions of this paper, for their precious feedback and helpful suggestions.

\printbibliography

\appendix
\section{Standard SAA Methods in the Context of Supplier Problems}\label{sec:appendix}
Consider the general two-stage stochastic setting: in the first stage, we take some proactive action $x \in X$ (i.e., the stage-I openings), incurring cost $c(x)$, and in the second stage, a scenario $A$ is sampled from the distribution $\mathcal{D}$, and we take some \emph{stage-II} recourse actions $y_A \in Y_A$ (i.e., stage-II openings) with cost $f_A(x, y_A)$. The goal is to find a solution $x^\star \in X$ minimizing $f(x) = c(x) + \mathbb{E}_{A \sim \mathcal{D}}[q_A(x)]$, where $q_A(x) = \min_{y \in Y_A}\{f_A(x,y) \}$. In our supplier problems, $X = Y_{A} = 2^{\mathcal{F}}$, where $x$ corresponds to a set of stage-I openings $F_{I}$ and $y_{A}$ corresponds to the set of stage-II openings $F_{A}$ under scenario $A$. Then, $c(x) = \sum_{i\in x} c_{i}^{I}$ is the total stage-I cost of the facilities in $F_{I}$ and $f_{A}(x, y_{A}) = c(x) + \sum_{i \in y_{A}}c_{i}^{A}$ is the total cost incurred by the openings across both stages.

\subparagraph*{The Standard SAA Method:}Consider minimizing $f(x)$ in the black-box model. If $S$ is a set of scenarios sampled from the black-box oracle, let $\hat{f}(x) = c(x) + \big{(}\sum_{A \in S}q_A(x)\big{)}/|S|$ be the empirical estimate of $f(x)$. Also, let $x^*$ and $\bar x$ be the minimizers of $f(x)$ and  $\hat{f}(x)$ respectively.

If $f(x)$ is modeled as a convex or integer program, then for any $\epsilon, \gamma \in (0,1)$ and with $|S| = \poly(n,m, \lambda, \epsilon, 1 / \gamma)$, we have $f(\bar x) \leq (1+\epsilon)f(x^*)$ with probability at least $1-\gamma$ (where $\lambda$ is the maximum multiplicative factor by which an element's cost is increased in stage-II) \cite{swamySAA,Charikar2005}. If $\bar x$ is an $\alpha$-approximate minimizer of $\hat f(x)$, then a slight modification to the sampling (see \cite{Charikar2005}) still gives $f(\bar x) \leq (\alpha +\epsilon) f(x^*)$ with probability at least $1-\gamma$.

The result of \cite{Charikar2005} allows us to reduce the black-box model to the polynomial-scenarios model, as follows. First find an $\alpha$-approximate minimizer $\bar{x}$ of $\hat{f}(x)$, and treat $\bar{x}$ as the stage-I actions. Then, given any arriving $A$, re-solve the problem using any known $\rho$-approximation algorithm for the non-stochastic counterpart, with $\bar{x}$ as a fixed part of the solution. This process eventually leads to an overall approximation ratio of $\alpha \rho + \epsilon$.

\subparagraph*{Roadblocks for Supplier Problems:}
A natural way to fit our models within this framework would be to guess  the optimal radius $R^*$ and use the opening cost as the objective function $f_{R^*}(x)$, with the radius requirement treated as a hard constraint. In other words, we set $f_{R^*}(x) = c^I(x) + \mathbb{E}_{A \sim \mathcal{D}}[q_{A,R^*}(x)]$ with $q_{A,R^*}(x)=\min_y\{c^A(y) ~|~ (x,y) \text{ covers all } j \in A \text{ within distance } R^* \}$. Here $f_{R^*}(x)$ may represent the underlying convex or integer program. If the empirical minimizer $\bar x_{R^*}$ can be converted into a solution with coverage radius $\alpha R^*$ and opening cost at most $f_{R^*}(\bar x_{R^*})$, we get the desired result because $f_{R^*}(\bar x_{R^*}) \leq (1+\epsilon)f_{R^*}(x^*_{R^*})$ and $f_{R^*}(x^*_{R^*}) \leq B$. 

With slight modifications, all our polynomial-scenarios algorithms can be interpreted as such rounding procedures.  Nonetheless, we still have to identify a good guess for $R^*$, \textbf{and this is an unavoidable roadblock in the standard SAA for supplier problems.} Observe that $R$ is a good guess if $f_R(x^*_R) \leq (1+\epsilon)B$, since in this way vanilla SAA combined with our rounding procedures would give opening cost $f_R(\bar x_R) \leq (1+2\epsilon)f_R(x^*_R)$, and minimizing over the radius is just a matter of finding the minimum good guess. However, empirically estimating $f_R(x)$ within an $(1+\epsilon)$ factor, which is needed to test $R$, may require a super-polynomial number of samples \cite{kleywegt} due to scenarios with high stage-II cost and small probability. \textbf{On a high level, the obstacle in supplier problems stems from the need to not only find a minimizer $\bar{x}_R$, but also compute its corresponding value $f_R(\bar{x}_R)$}. This makes it impossible to know which guesses $R$ are good, and consequently there is no way to optimize over the radius.

(If the stage-II cost of every scenario is polynomially bounded, then the variance of $\hat f_R(x)$ is also polynomial, and standard SAA arguments go through without difficulties as in Theorem~\ref{thm13}.)

\end{document}